\newtheorem{theorem}{Theorem}[section]
\newtheorem{definition}{Definition}
\theoremstyle{definition}
\theoremstyle{remark}
\tikzset{block/.style={draw, thick, text width=2cm, minimum height=0.75cm, align=center}
		}
\begin{document}

\title{Multi-signer Strong Designated Multi-verifier Signature Schemes based on Multiple Cryptographic Algorithms}
\author{Neha Arora, R. K. Sharma}
\date{}
\maketitle
\begin{center}
\textit{Department of Mathematics, Indian Institute of Technology, New Delhi, 110016, India}
\end{center}

\begin{abstract}
A designated verifier signature scheme allows a signer to generate a signature that only the designated verifier can verify. This paper proposes multi-signer strong designated multi-verifier signature schemes based on multiple cryptographic algorithms and has proven their security in the random oracle model.
\end{abstract}
\textbf{Keywords:} Multi-signer, designated multi-verifier, bilinear pairing, factorization, discrete logarithm, Blockchain \\
2010 Math. Sub. Classification: 94P60, 94A62, 68P25, 68P30
\footnote{emails: nehaarora1907@gmail.com (Neha), rksharmaiitd@gmail.com (Rajendra)} 

\section{Introduction}

\textit{Digital signature} \cite{diffie1976new,elgamal1985public,rivest1978method} is a mathematical algorithm used to validate the authenticity, integrity, and non-repudiation of a message or document. In digital signature algorithms, anyone having the signature and signer's public key can validate the signature, which is not desired in some cases. Therefore, the undeniable signature \cite{chaum1989undeniable} was introduced by David Chaum and Hans van Antwerpen in 1989. 
An \textit{undeniable signature} is a digital signature that requires the signer's cooperation to verify signatures.
However, the validity of an undeniable signature can be ascertained by anyone issuing a challenge to the signer and testing the signer's response.
To solve this problem, Jakobsson Markus, Kazue Sako, and Russell Impagliazzo introduced a \textit{designated verifier signature scheme} \cite{jakobsson1996designated} in 1996.
It provides authentication of a message, and the signer chooses a designated verifier in advance, which makes it non-interactive. However, it does not provide non-repudiation property. Also, in this scheme, the verifier can generate a transcript and convince the third party that the signer created the signature.  
In the same year, a stronger version is introduced as a \textit{strong designated verifier signature scheme}. However, in 2003, Guilin Wang designed an attack in \cite{wang2003attack} on the Jakobsson scheme. Later, in 2003, an efficient strong designated verifier signature scheme  \cite{saeednia2003efficient} was introduced by Shahrokh Saeednia, Steve Kremer, and Olivier Markowitch.
\\ However, in 2007, Ji-Seon Lee and Jik Hyun Chang in \cite{lee2007strong} found that Saeednia scheme \cite{saeednia2003efficient} was not secure, and it would reveal the signer's identity if the secret key of the signer is compromised. Later, they provide an improved version \cite{lee2009comment}. However, in 2012, Liao Da-jian and Tang Yuan-sheng found that \cite{lee2009comment} does not protect the identity of the signer, and it can be revealed if compromised. Thus, they designed an improved version \cite{da2012comment} of \cite{lee2009comment}  which protects the signer's identity but lost the security properties of the designated verifier signature scheme. Also, in 2017, \cite{khan2017secure} proved that the scheme \cite{lee2009comment} is not secure, and signature can be forged without knowing the signer's private key and designed a new strong designated verifier signature scheme.
\\ In 2015, Pankaj Sarde and Amitabh Banerjee reviewed \cite{saeednia2003efficient} and proposed strong designated verifier signature scheme based on discrete logarithm problem \cite{sarde2015strong}.
\\ In 2018 and later in 2021, Nadiah Shapuan and Eddie Shahril Ismail proposed a strong designated verifier signature scheme with two hard problems : Factoring problem and discrete logarithm problem \cite{shapuan2018new,shapuan2021strong}.
\\ Simultaneously, Few multi signer designated verifier signature scheme \cite{islam2012certificateless,li2005id,zhang2008multi}, designated multi verifier signature schemes \cite{laguillaumie2004multi,laguillaumie2007multi,ming2008universal,yang2008strong} and multi-signer designated multi-verifier signature schemes \cite{chang2011id,deng2018id,tzeng2004nonrepudiable} were introduced.
\\ In the case of Multi-signer designated multi-verifier signature schemes, signatures can be generated in two different ways :
\begin{enumerate}
\item All signers come together and cooperate to generate a new signature.
\item All signers generate their signature and make it public to the system, and then, the system generates a new signature using all available signatures.
\end{enumerate}
Multi-signer designated multi-verifier signature schemes has application in various areas. A few of them are mentioned below:
\begin{enumerate}
\item Let A be a group of board of Directors of a company, and they have signed some crucial and confidential data which may lead to the increase in profits and price of shares of a company and make its encoded copy public such that each shareholder can verify if the decisions will benefit them or not.

\item \textbf{Application using Blockchain Technology:} 
\\ To apply our scheme to the blockchain, we made a few assumptions :
\begin{enumerate}
\item Our example best suits a small blockchain where participants are limited, and the transaction information is shared among them only.
\item Our platform is a smart contract-enabled permissioned Blockchain Network.
\item Each signer and verifier signed a smart contract, and any cheating and violation led to the blocked account.
\item Only designated verifiers are allowed to verify and validate the information.
\item For every transaction information shared, the participants will be the signers, and the remaining will act as verifies.
\item Moreover, ownership is not assigned to a single person. Thus, overriding, editing, or deleting the transaction or any other related information is not possible without the consent of a fixed number of participants.
\item A new block is generated after a fixed predetermined time interval.
\item The hash value of all transactions in a fixed interval are merged as a root hash in Merkle tree.
\item Each block is linked with the previous block.
\item Each block contains block version, Merkle tree root hash, timestamp, nBits, nounce, parent block hash.
\begin{center}
\begin{small}
\begin{tikzpicture}
\begin{scope}

  \node[block] (a) {Block version};
  \node[block,right=of a] (b) {{Merkle tree root hash}};
  \node[block,right=of b] (c) {{Parent block hash}};
   \node[block] (d) at ([yshift=-1cm]$(b)!1!(a)$) {Timestamp};
  \node[block,right=of d] (n) {nBits};
  \node[block,right=of n] (v) {Nounce};
  \node[block] (e) at ([yshift=-3cm]$(b)!1!(a)$) {TX};
  \node[block,right=of e] (g) {TX};
  \node[block,right=of g] (h) {TX};
  \node[draw,inner xsep=5mm,inner ysep=10mm,fit=(a)(b) (c) (d) (n) (v) (g) (h),label={270:\textbf{{\large Block Structure}}}]{};
  \node[draw,inner xsep=2mm,inner ysep=2mm,fit=(a) (b) (c) (d) (n) (v),label={90: \textbf{{\large Block Header}}}]{\vspace{2 cm}};
  \node[draw,inner xsep=2mm,inner ysep=2mm,fit=(e) (g) (h),label={90: \textbf{{\Large Transaction Counter}}}]{};

\end{scope}
\end{tikzpicture}
\end{small}
\end{center}
\end{enumerate}
We shall show it with an example:
\begin{enumerate}
\item We assume we have 20 participants in our Blockchain Network and each member is given an identity $P_i$ , $1 \leq i \leq 20$.
\item A new block is generated after every $t'$ minutes.
\item Let $T_{i,j}$ be the transaction id where $i$ represents the block number and $j$ represents the transaction number of $i$-th block. 
\item Let $P_1$ transfers some amount to $P_2, P_3, P_5, P_7$ and $P_9$ transfer some amount to $P_1, P_3, P_5,P_{11}$ in a fixed time interval $[0, t'] $. Let $S_1 = \{P_1, P_2, P_3, P_5, P_7, P_9,B_{11}\}$ be the set of signers. None of them wants to reveal the exact amount, but their ids, and encoded value of the change in their respective balance. In this case, each signer creates and sends the signature to the system. The system generates a new signature and broadcasts it to all remaining participants, and each verifier, within a specific time, can verify and validate the transaction.

\tikzstyle{level 1}=[level distance=30mm, sibling distance=15mm]
\begin{center}
\begin{tikzpicture}[grow=right,->,>=angle 60]
\begin{scope}[xshift=0]
  \node {$P_1$}
    child {node {$P_7$}  edge from parent node[above]
                {$T_{1,4}$}}
          child {node {$P_5$} edge from parent node[above]
                {$T_{1,3}$}}
          child {node {$P_3$} edge from parent node[above]
                {$T_{1,2}$}}
    child {node {$P_2$}  edge from parent node[above]
                {$T_{1,1}$}};

\end{scope}

\begin{scope}[xshift=6cm]
  \node {$P_9$}
    child {node {$P_{11}$} edge from parent node[above]
                {$T_{1,8}$}}
          child {node {$P_5$}  edge from parent node[above]
                {$T_{1,7}$}}
          child {node {$P_3$}  edge from parent node[above]
                {$T_{1,6}$}}
    child {node {$P_1$}  edge from parent node[above]
                {$T_{1,5}$}
      };

\end{scope}

\end{tikzpicture}
\end{center}
\item Let $P_7$ transfer some amount to $P_{13},P_{15}$ , $P_4$ to $P_2, P_7, P_{13}$ , and $P_3$ to $P_4, P_5, P_{11},P_{13}$ in a fixed time interval $[t', 2t']$. Then, $S _2= \{P_2, P_3, P_4, P_5, P_7, P_{11}, P_{13}, P_{15}\}$ be the new set of signers and remaining participants will act as  verifiers and they can verify and validate the transactions with a specific time.

\tikzstyle{level 1}=[level distance=30mm, sibling distance=15mm]
\begin{tikzpicture}[grow=right,->,>=angle 60]
\begin{scope}[xshift=0]
  \node {$P_7$}
    child {node {$P_{15}$}  edge from parent node[above]
                {$T_{2,2}$}}
          child {node {$P_{13}$} edge from parent node[above]
                {$T_{2,1}$}};     
\end{scope}

\begin{scope}[xshift=6cm]
  \node {$P_4$}
    child {node {$P_{13}$} edge from parent node[above]
                {$T_{2,5}$}}
          child {node {$P_7$}  edge from parent node[above]
                {$T_{2,4}$}}
          child {node {$P_2$}  edge from parent node[above]
                {$T_{2,3}$}};
\end{scope}

\begin{scope}[xshift=12cm]
  \node {$P_3$}
    child {node {$P_{13}$} edge from parent node[above]
                {$T_{2,9}$}}
          child {node {$P_{11}$}  edge from parent node[above]
                {$T_{2,8}$}}
          child {node {$P_5$}  edge from parent node[above]
                {$T_{2,7}$}}
    child {node {$P_4$}  edge from parent node[above]
                {$T_{2,6}$}
      };

\end{scope}
\end{tikzpicture}

\end{enumerate}
This way, we can record and verify multiple transactions compactly. Also, the privacy of each transaction can be maintained by keeping individual amounts secret and making hash value the total amount public. It also decreases the size of the blocks and reduces the storage cost, making it cheaper to operate.

\end{enumerate}

\vspace{0.5 cm}

So far, we have seen a few multi-signer designated verifier signature schemes, designated multi-verifier signature schemes, and multi-signer designated multi-verifier signature schemes based on either single hard problem or bilinear pairing. This paper proposes two individual multi-signer strong designated multi-verifier signature schemes and combines them to enhance their security. Our final scheme is based on multiple hard problems, and it requires the authorization of each participating signer. Also, our scheme is non-interactive as the system will act as an intermediator. Later, we showed that our scheme is a strong, unforgeable, non-transferable designated verifier scheme. 

This paper is organized as follows: In Section 2, we define factoring problem, discrete logarithm problem, bilinear pairing, strong designated verifier, strong designated verifier signature scheme, and multi-signer designated multi-verifier signature schemes. Section 3 is divided into four subsections. Section 3.1 is an extension of \cite{zhang2008multi} to multi-verifier scheme, and it is based on bilinear pairing. Section 3.2 is an extension of \cite{shapuan2018new,shapuan2021strong} to multi-signer multi-verifier scheme, and it is based on two hard problems: factoring problem and discrete logarithm problem. Section 3.3 is an improvisation of section 3.2 based on the elliptic curve. Section 3.4 gives a glimpse of the combination of the first and second algorithm. In section 4, we have done the security analysis of our scheme.


\section{Preliminaries}

\begin{definition}
\textbf{Factoring problem :}
Let $n$ be a large positive composite integer such that $n \ = \ pq$, where $p$ and $q$ are large primes. Then, solving to find the factorization of $n$ is called factoring problem.
\end{definition}

\begin{definition}
\textbf{Discrete Logarithm problem:} 
Let $p,q$ be primes such that $q$ divides $p-1$. Let G be a multiplicative group of order $p$, and $g \in G$ such that $|g| = q $ and $y \ \equiv \ g^x \ (mod \ n)$, then for given $g$ and $y$, computation of $x$ is called as discrete logarithm problem.
\end{definition}

\begin{definition}
\textbf{Bilinear pairing :}

Let $G_1$ and $G_2$ be two cyclic groups (w.r.t. addition) of prime order $p$ with generators $P_1$ and $P_2$, respectively. Define a map $e : G_1 \times G_2 \to G_{\mathcal{T}}$, where $G_{\mathcal{T}}$ is a group (w.r.t. multiplication) of order $p$. Then, the map $e$ is s.t.b. $\mathbb{F}_p$-bilinear if $e(a P_1, b P_2) \ = \ e(P_1,P_2)^{a b} \ \ \forall a,b \in \mathbb{F}_p $, non-degenerate if $e(P_1,P_2) \neq 1_{G_{\mathcal{T}}} $, and symmetric if $G_1 \ = \ G_2$.

The pairing $e$ is an admissible bilinear map if it is non-degenerate and it can be efficiently computable.


\end{definition}

\begin{definition} (\cite{saeednia2003efficient} )
\textbf{Strong designated Verifier :}
Let P(S,V) be a protocol for signer S to prove the truth of the statement $\omega$ to verifier V. We say that P(S,V) is strong designated verifier proof if anybody can produce identically distributed transcripts that are indistinguishable from those of P(S,V) for everybody, except for Verifier V.

\end{definition}

\begin{definition}
\textbf{Strong Designated Verifier Signature Scheme :}
A strong designated verifier signature scheme consists of the following algorithms:
\begin{enumerate}
\item Set up : $sp \ \leftarrow \ Setup(K)$, where $K$ is a security parameter, $sp$ is system parameters.
\item Key generation : $ (sk_U, pk_U) \ \leftarrow \ KeyGen(sp,K) $, where $sk_U$ and $pk_U$ are the secret key and public key of user $U$ respectively.
\item Signature Algorithm $ Sign_{{S} \ \rightarrow \ {D}}$ : $$ sign_S \ \leftarrow \ Sign_{{S} \ \rightarrow \ {D}} (M,sk_S,pk_D) $$
where $M \ \in \ \{0,1\}^* $ is the message to be signed, $sk_{S} $ denote the private key of signer $S$ and $pk_{D}$ denote the public key of designated verifier $D$.   
\item Verification Algorithm $ Verify_{D}$ :
$$ \{accept, \ reject \} \ \leftarrow \ Verify_{D} (M,sk_{D},pk_{S}, sign_S) $$ 
where $sk_{D}$ denote the secret key of designated verifier $D$ and $pk_{S}$ denote the public key of signer $S$.
\end{enumerate}

\end{definition}

\begin{definition}
\textbf{Multi-signer strong designated Multi-verifier signature scheme :}
A Multi-signer strong designated Multi-verifier signature scheme consists of the following algorithms:
\begin{enumerate}
\item Set up : $sp \ \leftarrow \ Setup(K)$, where $K$ is a security parameter, $sp$ is system parameters.
\item Key generation : $ (sk_U, pk_U) \ \leftarrow \ KeyGen(sp,K) $, where $sk_U$ and $pk_U$ are the secret key and public key of user $U$ respectively.
\item Signature Algorithm $Sign_{(S_1,S_2,...,S_n) \ \rightarrow \ (D_1,D_2,...,D_m)}$ : $$ sign_A \ \leftarrow \ Sign_{(S_1,S_2,...,S_n) \ \rightarrow \ (D_1,D_2,...,D_m)} (M,sk_{S_1},sk_{S_2},...,sk_{S_n},pk_{D_1},pk_{D_2},...,pk_{D_m}) $$
where $M \ \in \ \{0,1\}^* $ is the message to be signed, $(sk_{S_1},sk_{S_2},...,sk_{S_n}) $ denote the private key of $n$ signers $(S_1,S_2,...,S_n)$ and $(pk_{D_1},pk_{D_2},...,pk_{D_m})$ denote the public key of $m$ designated verifiers $(D_1,D_2,...,D_m)$.   
\item Verification Algorithm $ Verify_{(D_1,D_2,...,D_m)}$ :
$$ \{accept, \ reject \} \ \leftarrow \ Verify_{(D_1,D_2,...,D_m)} (M,sk_{D_1},sk_{D_2},...,sk_{D_m},pk_{S_1},pk_{S_2},...,pk_{S_n}, sign_A) $$ 
where $(sk_{D_1},sk_{D_2},...,sk_{D_m})$ denote the secret key of $m$ designated verifiers $(D_1,D_2,...,D_m)$ and $(pk_{S_1},pk_{S_2},...,pk_{S_n})$ denote the public key of $n$ signers $(S_1,S_2,...,S_n)$.
\end{enumerate}

\end{definition}

\section{Proposed Schemes}

This paper defines a secure, strong designated signature scheme with multi-signers and multi-verifiers based on multiple hard problems. Our scheme is divided into four subsections. Section 3.1 is an extension of \cite{zhang2008multi} to multi-verifier scheme, and it is based on bilinear pairing. Section 3.2 is an extension of \cite{shapuan2018new,shapuan2021strong} to multi-signer multi-verifier scheme, and it is based on two hard problems: factoring problem and discrete logarithm problem. Section 3.3 is an improvisation of section 3.2 based on the elliptic curve. We have illustrated the algorithm with examples. We can combine the first algorithms with the second algorithm or an improvised version of the second algorithm by making minor changes described in section 3.4.

Let $A = \{ S_1,S_2,...,S_n \} $ be the group of signers and $B = \{ D_1, D_2,...,D_m \} $ be the group of designated verifiers. Here $A$ and $B$ are systems that stores information which is made public by $S_i$ and $D_j$ respectively and then they interact with each other $l$ times, where $l$ is a predetermined number.

\subsection{Algorithm 1}


\vspace{0.3 cm}

\hspace{0.4 cm} \textbf{{\large Set up}}
  
\vspace{0.2 cm}  

Let $K$ be the security parameter and $sp \ = \ (G,G_\tau,g,p,e,H_1) \ \leftarrow \ Setup(K)$, where
\begin{itemize}
\item $G$ is an additive cyclic group of large prime order $p$ and $g$ is a generator of $G$,
\item $G_\tau $ is a multiplicative cyclic group of prime order $p$,
\item $e : \ G \times G \ \rightarrow \ G_\tau$ is symmetric admissible bilinear pairing map,
\item $H_1 : \{0,1\}^* \ \rightarrow \ G $ is a cryptographically secure hash function.
\end{itemize}

\vspace{0.3 cm}

\textbf{{\large Key Generation}}
  
\vspace{0.2 cm}

\begin{itemize}
\item Let there are $n$ signers $ \{S_1,S_2,...,S_n\} $ and $m$ designated verifiers $ \{D_1,D_2,...,D_m\} $
\item Let $M \ \in \ \{0,1\}^* $ be the message, then $H_1(M) \ \in \ G $.
\item Let $a_i$ be the secret key of signer $S_i$ and $p_i \ = \ a_i g $ be the corresponding public key of $S_i \ \ \ \forall \ \ i \in \{1,2,...,n\}  $ known to system $A$.
\item System $A$, then, publishes $u \ = \ \displaystyle\sum_{i=1}^{n} p_i $.
\item Let $b_i$ be the secret key of designated verifier $D_i$ and $q_i \ = \ b_i g$ be the corresponding public key of $D_i \ \ \ \forall \ \ i \in \{1,2,...,m\} $ known to system $B$. 
\item System $B$, then, publishes $v \ = \ \displaystyle\sum_{i=1}^{m} q_i $.
\end{itemize}

\vspace{0.3 cm}

\textbf{{\large Signature Algorithm}}
  
\vspace{0.2 cm}

\begin{itemize}
\item Each signer $S_i$ computes $\sigma_i \ = \ Sign(M,a_i,q_1,q_2,...,q_m) \ = \ e(H_1(M),a_i v) $ as his signature and make it public.
\item After all signatures $(\sigma_1,\sigma_2,...,\sigma_n)$ has been collected by the system $A$, it computes $\sigma \ = \ \displaystyle\prod_{i=1}^{n} \sigma_i $ and send it  to system $B$.
\end{itemize}

\vspace{0.3 cm}

\textbf{{\large Verification Algorithm}}
  
\vspace{0.2 cm}

\begin{itemize}
\item System $B$ has a message $M$ and $\sigma$.
\item Each designated verifier $D_i$ computes $\zeta_i \ = \ Sign(M,b_i,p_1,p_2,...,p_n) \ = \ e(H_1(M),b_i u)$ and make it public to the system $D$.
\item System $B$ computes $\zeta \ = \ \displaystyle\prod_{i=1}^{m} \zeta_i$.
\item Accept the output if $\sigma \ \equiv \ \zeta$ in $G_\tau $, otherwise reject.
\end{itemize}

\vspace{0.3 cm}

\textbf{{\large Transcript-Simulation}}
  
\vspace{0.2 cm}

Since, $\sigma \ \equiv \zeta $ in $G_\tau $, simulated signature is equivalent to the signature produced by system A, thus simulation in this case is obvious .

\vspace{0.3 cm}

\textbf{{\large Example 1}}
  
\vspace{0.2 cm}

Let, $G \ = \ \mathbb{Z}_{11}$ is an additive cyclic group of prime order $11$ and $2$ is a generator of $G$,

$G_\tau \ = \ \{1,2,3,4,6,8,9,12,13,16,18\}  \ \ \ \bigodot_{23} $ is a multiplicative cyclic group of prime order $11$,

$e : \ G \times G \ \rightarrow \ G_\tau$ is symmetric admissible bilinear pairing map such that $e(2,2) \ = \ 2$.

Then, for any $a,b \ \in G , \ \ \exists \ \ 0 \leq x,y \leq 10 $ respectively such that $a \ = \ 2x $ and $b \ = \ 2y $,

 implies $e(a,b) \ = \ e(2,2)^{xy} \ = \ 2^{xy}$
 
Let, $H_1 : \{0,1\}^* \ \rightarrow \ \mathbb{Z}_{11} $ is a cryptographically secure hash function and $M \ \in \ \{0,1\}^* $ be the message, then $H_1(M) \ \in \ \mathbb{Z}_{11}  \ \ \Rightarrow \ \ H_1(M)  \ = \ 2c $ for some $ 1 \leq c \leq 11 $.

Let $A \ = \ \{S_1,S_2,S_3\}$ be three signers and $B \ = \ \{D_1,D_2\}$ be two designated verifiers.

Let $a_1 \ = \ 3$ , $a_2 \ = \ 7$ , $a_3 \ = \ 9$ be the secret keys and $p_1 \ = \ 6$ , $p_2 \ = \ 3$ , $p_3 \ = \ 7$ be public keys (made public to system $A$ only) of $S_1,S_2,S_3$ respectively. Then System $A$ will compute $u \ = \ \displaystyle\sum_{i=1}^{3} p_i \ \equiv \ 5 $ and make it public.

Let $b_1 \ = \ 6$ , $b_2 \ = \ 8$ be the secret keys and $q_1 \ = \ 1$ , $q_2 \ = \ 5$ be public keys (made public to system $B$ only) of $D_1,D_2$ respectively. Then System $B$ will compute $v \ = \ \displaystyle\sum_{i=1}^{2} q_i \ \equiv \ 6 $ and make it public.

Note that $$\sigma_1 \ = \ e(2,2)^{9c} \ = \ 2^{9c}$$
 $$\sigma_2 \ = \ e(2,2)^{10c} \ = \ 2^{10c}$$
 $$\sigma_3 \ = \ e(2,2)^{5c} \ = \ 2^{5c}$$
implies
$$ \sigma \ = \ \displaystyle\prod_{i=1}^{3} \sigma_i  \ = \ 2^{24c} \ \equiv \ 2^{2c} \ \ in \ G_\tau$$

Similarly
$$\zeta_1 \ = \ e(2,2)^{4c} \ = \ 2^{4c} $$
$$\zeta_2 \ = \ e(2,2)^{9c} \ = \ 2^{9c} $$
implies 
$$ \zeta \ = \ \displaystyle\prod_{i=1}^{2} \zeta_i  \ = \ 2^{13c} \ \equiv \ 2^{2c} \ \ in \ G_\tau$$


\vspace{0.3 cm}

\textbf{{\large Example 2}}
  
\vspace{0.2 cm}

Let, $G \ = \ \mathbb{Z}_{53} $ is an additive cyclic group of prime order $53$ and $g \ = \ 5$ is a generator of $G$,

$G_\tau \ = \ < \ 3 \ / \ 3^{53} \ = \ 1>  \ \ \ \bigodot_{107} $ is a multiplicative cyclic group of prime order $53$,

$e : \ G \times G \ \rightarrow \ G_\tau$ is symmetric admissible bilinear pairing map such that $e(5,5) \ = \ 3$.

Then, for any $a,b \ \in G , \ \ \exists \ \ 0 \leq x,y \leq 52 $ respectively such that $a \ = \ 5x $ and $b \ = \ 5y $,

 implies $e(a,b) \ = \ e(5,5)^{xy} \ = \ 3^{xy}$
 
Let, $H_1 : \{0,1\}^* \ \rightarrow \ \mathbb{Z}_{53} $ is a cryptographically secure hash function and $M \ \in \ \{0,1\}^* $ be the message, then $H_1(M) \ \in \ \mathbb{Z}_{53}  \ \ \Rightarrow \ \ H_1(M)  \ = \ 5c $ for some $ 1 \leq c \leq 52 $.

Let $A \ = \ \{S_1,S_2,S_3,S_4,S_5\}$ be the system having five signers and $B \ = \ \{D_1,D_2,D_3,D_4,D_5,D_6,D_7\}$ be the system having seven designated verifiers.

Let $a_1 \ = \ 7$ , $a_2 \ = \ 12$ , $a_3 \ = \ 15$ , $a_4 \ = \ 19$ , $a_5 \ = \ 31$ be the secret keys and $p_1 \ = \ 35$ , $p_2 \ = \ 7$ , $p_3 \ = \ 22$ , $p_4 \ = \ 42$ , $p_5 \ = \ 49$ be public keys (made public to system $A$ only) of $S_1,S_2,S_3,S_4,S_5$ respectively. Then System $A$ will compute $u \ = \ \displaystyle\sum_{i=1}^{5} p_i \ \equiv \ 49 $ and make it public.

Let $b_1 \ = \ 10$ , $b_2 \ = \ 13$ , $b_3 \ = \ 17$ , $b_4 \ = \ 23$ , $b_5 \ = \ 51$ , $b_6 \ = \ 27$ ,  $b_7 \ = \ 36$  be the secret keys and $q_1 \ = \ 50$ , $q_2 \ = \ 12$ , $q_3 \ = \ 32$ , $q_4 \ = \ 9$ , $q_5 \ = \ 43$ , $q_6 \ = \ 29$ , $q_2 \ = \ 21$ be public keys (made public to system $B$ only) of $D_1,D_2$ respectively. Then System $B$ will compute $v \ = \ \displaystyle\sum_{i=1}^{6} q_i \ \equiv \ 37 $ and make it public.

Note that $$\sigma_1 \ = \ 3^{20c}$$
 $$\sigma_2 \ = \ 3^{4c}$$
 $$\sigma_3 \ = \ 3^{5c}$$
 $$\sigma_4 \ = \ 3^{24c}$$
 $$\sigma_5 \ = \ 3^{28c}$$
implies
$$ \sigma \ = \ \displaystyle\prod_{i=1}^{5} \sigma_i  \ \equiv \ 3^{28c} \ \ in \ G_\tau$$

Similarly
$$\zeta_1 \ = \ 3^{45c} $$
$$\zeta_2 \ = \ 3^{32c} $$
$$\zeta_3 \ = \ 3^{50c} $$
$$\zeta_4 \ = \ 3^{24c} $$
$$\zeta_5 \ = \ 3^{44c} $$
$$\zeta_6 \ = \ 3^{42c} $$
$$\zeta_7 \ = \ 3^{3c} $$
implies 
$$ \zeta \ = \ \displaystyle\prod_{i=1}^{7} \zeta_i  \ \equiv \ 3^{28c} \ \ in \ G_\tau$$


\subsection{Algorithm 2}

\vspace{0.3 cm}

\hspace{0.4 cm} \textbf{{\large Set up}}
  
\vspace{0.2 cm}

Let $K$ be the security parameter and $sp \ = \ (\widetilde{G},g_A,g_B,p,n_A,n_B,H_2) \ \leftarrow \ Setup(K)$, where
\begin{itemize}

\item $p$ be a large prime such that $ n_A $ and $ n_B $ are factors of $p-1$,

\item $\widetilde{G} \ = \ Z_p ^*$ and $g_A \ , \ g_B \ \ \in \ \widetilde{G}$ such that $|g_A| \ = \ n_A $, $|g_B| \ = \ n_B $,

\item $H_2$ be a cryptographically secured hash function with arbitrary bit length.

\end{itemize} 


\vspace{0.3 cm}

\textbf{{\large Key Generation}}
  
\vspace{0.2 cm}

\emph{\textbf{Key generation algorithm for system $A$:}}

\begin{enumerate}
\item System $A$ chooses prime $p_A$ and $q_A$, then computes $n_A \ = \ p_A \ * \ q_A$.
\item Each $S_i$ follow the following steps :
\begin{itemize}
\item randomly choose $e_{A_i}$ such that g.c.d.$(e_{A_i}, \phi(n_A)) \ = \ 1$.
\item compute $d_{A_i}$ such that $e_{A_i} \ d_{A_i} \ = \ 1 \ (mod \ \phi(n_A))$.
\item choose an integer $x_{A_i} \ \in \ Z_p ^* $.
\item calculate $y_{A_i} \ = \ g_A ^{x_{A_i}} \ (mod \ p)$.
\item public key of $S_i$ is $(e_{A_i}, y_{A_i})$.
\item private key of $S_i$ is $(d_{A_i}, x_{A_i})$.
\end{itemize}
\item Note : Since, each member in system $A$ knows $\phi(n_A)$ and $e_{A_i} \ \ \forall i $, they can compute $d_{A_i}$. Thus, $d_{A_i}$ is private for members of system $B$ but not for members of system $A$.
\end{enumerate}


\emph{\textbf{Key generation algorithm for system $B$:}}
\begin{enumerate}
\item System $B$ chooses prime $p_B$ and $q_B$, then computes $ n_B \ = \ p_B \ * \ q_B $.
\item Each $D_i$ follow the following steps :
\begin{itemize}
\item randomly choose $e_{B_i}$ such that g.c.d.$(e_{B_i}, \phi(n_B)) \ = \ 1$.
\item compute $d_{B_i}$ such that $e_{B_i} \ d_{B_i} \ = \ 1 \ (mod \ \phi(n_B))$.
\item choose an integer $x_{B_i} \ \in \ Z_p ^* $.
\item calculate $y_{B_i} \ = \ g_B ^{x_{B_i}} \ (mod \ p)$.
\item public key of $D_i$ is $(e_{B_i}, y_{B_i})$.
\item private key of $D_i$ is $(d_{B_i}, x_{B_i})$.
\end{itemize}
\item Note : Since, each member in system $B$ knows $\phi(n_B)$ and $e_{B_i} \ \ \forall i $, they can compute $d_{B_i}$. Thus, $d_{B_i}$ is private for members of system $A$ but not for members of system $B$.
\end{enumerate}


\vspace{0.3 cm}

\textbf{{\large Signature Algorithm}}
  
\vspace{0.2 cm}

System $A$ generates a signature $(r,s,t,\bar{u})$ for a message $M$ as follows:
\begin{enumerate}
\item Each $S_i$ chooses a random integer $k_i$ and keep it secret.
\item Each $S_i$ generates $r_i \ = \ g_A ^{k_i} (y_{B_1} \ y_{B_2} \ . \ . \ . \ y_{B_m} )^{- k_i} \ \ (mod \ p) $ ,  $ s_i \ = \ g_B ^{k_i} \ \ (mod  \ p) $ and $ w_i \ = \ g_A ^{k_i} \ \ (mod  \ p) $ and make it public to the system $A$.
\item System computes :
\begin{itemize}
\item $r \ = \ \displaystyle \prod_{i=1}^{n} r_i \ \ (mod  \ p) $
\item $s \ = \ (\displaystyle \prod_{i=1}^{n} s_i)^r \ \ (mod  \ p) $
\item $w \ = \ (\displaystyle \prod_{i=1}^{n} w_i)^r \ \ (mod  \ p) $
\item $ z \ = \ H_2(M,w) $
\item $ t \ = \ z^{\displaystyle \prod_{i=1}^{m} e_{B_i}}  \ \ (mod \ n_B)$ 
\end{itemize} 
\item After computing $(r,s,w,z,t)$, they are made public to all $S_i$ and then each $S_i$ computes $v_i \ = \ z x_{A_i} \ + \ k_i r$ and make it public to the system $A$.
\item System computes
\begin{itemize}
\item $\bar{v} \ = \ \displaystyle \sum_{i=1}^{n} v_i \ \ (mod \ n_A) $
\item $\bar{u} \ = \ (\bar{v})^{\displaystyle \prod_{i=1}^{n} d_{A_i}} \ \ (mod \ n_A) $
\end{itemize}
System $A$ sends the message $M$ and the signature $(r,s,t,\bar{u})$ to the system $B$ (and all the designated verifiers $D_i$).
\end{enumerate}


\vspace{0.3 cm}

\textbf{{\large Verification Algorithm}}
  
\vspace{0.2 cm}

Upon receiving the message $M$ and signature $(r,s,t,\bar{u})$, each designated verifier $D_i$ computes $z_i \ = \ s^{x_{B_i}}$ and make it public. Then, each $D_i$ can separately or together verify and validate the signature by checking the following equations:
\begin{enumerate}
\item Computes $a \ = \ (\bar{u})^{\displaystyle \prod_{i=1}^{n} e_{A_i}} \ \ (mod \ n_A) $ 
\item Computes $b \ = \ t^{\displaystyle \prod_{i=1}^{m} d_{B_i}} \ \ (mod \ n_B) $
\item Check $ c \ = \ g_A ^a  \ (y_{A_1} y_{A_2} . . . y_{A_n})^{-b} \ \ (mod \ p) \ \ = \ \ r^r \displaystyle \prod_{i=1}^{m} z_i \ \ (mod \ p) $
\item Lastly, signatures are accepted if $b \ = \ H_2(M,c)$.
\item If any of the condition from above two conditions fail, verifier can deny to accept the signature. In case of more than $2$ verifiers, if $50 \% $ or more verifiers deny the signature, system $B$ will return the signature to system $A$.
\end{enumerate}


\vspace{0.3 cm}

\textbf{{\large Transcript-Simulation}}
  
\vspace{0.2 cm}

System $B$ (or designated verifier $D$) can simulate the correct transcripts as follows:
\begin{enumerate}
\item System $B$ chooses a random integer $k^{'}$ and keep it secret.
\item Then $D$ generates :

\begin{itemize}
\item $r^{'} \ = \ g_B ^{k^{'}} (y_{A_1} \ y_{A_2} \ . \ . \ . \ y_{A_n} )^{- k^{'}} \ \ (mod \ p) $
\item $s^{'} \ = \ g_A ^{k^{'} r^{'}} \ \ (mod  \ p) $
\item $w^{'} \ = \ g_B ^{k^{'} r^{'}} \ \ (mod  \ p) $
\item $ z^{'} \ = \ H_2(M,w^{'}) $
\item $ t^{'} \ = \ {z^{'}}^{\displaystyle \prod_{i=1}^{n} e_{A_i}}  \ \ (mod \ n_A)$ 
\item $v^{'} \ = \ \displaystyle \sum_{i=1}^{m} {v_i}^{'} \ \ (mod \ n_B) $, where $ {v_i}^{'} \ = \ z^{'} x_{B_i} \ + \ k^{'} r^{'} $ is made public by $D_i$
\item $u^{'} \ = \ ({v^{'}})^{d_{B}} \ \ (mod \ n_B) $, where $d_B \ = \ \displaystyle \prod_{i=1}^{m} d_{B_i}$
\end{itemize} 
System $B$ simulated signature is $(r^{'},s^{'},t^{'},u^{'})$.
\end{enumerate}



\vspace{0.3 cm}

\textbf{{\large Example 1}}
  
\vspace{0.2 cm}

Let $A = \{ S_1,S_2,S_3 \} $ be the group of three signers (or provers) and $B = \{ D_1, D_2 \} $ be the group of two designated verifiers.

Let $p \ = \ 211 $ be a prime such that $ n_A \ = \ 15 $ and $ n_B \ =  \ 14 $ are factors of $p-1 \ =  \ 210 $.

Let $\widetilde{G} \ = \ Z_{211} ^*$ and $g_A \ = \ 137 \ , \ g_B \ = \ 63 $  in  $ \widetilde{G} $  such  that $|137| \ = \ 15 $, $|63| \ = \ 14 $.

Let $H_2$ be a cryptographically secured hash function with arbitrary bit length.


\textbf{{ Key generation algorithm for system $A$:}}
\begin{enumerate}
\item System $A$ chooses prime $p_A \ = \ 3 $ and $q_A \ = \ 5 $, then computes $n_A \ = \ 3 \ * \ 5 $.
\item Each $S_i$ follow the following steps :

\begin{center}
\begin{tabular}{ |c|c|c|c|c|c| } 
 \hline
 $S_i$ & $e_{A_i}$ & $d_{A_i}$ & $x_{A_i}$ & $y_{A_i}$ \\
 \hline \hline 
 $S_1$ & $13$ & $5$ & $7$ & $150$ \\
 $S_2$ & $7$ & $7$ & $16$ & $137$ \\
 $S_3$ & $11$ & $3$ & $21$ & $55$ \\
 \hline
\end{tabular}
\end{center}

\item public key of $S_i$ is $(e_{A_i}, y_{A_i})$.
\item private key of $S_i$ is $(d_{A_i}, x_{A_i})$.

\item Note : Since, each member in system $A$ knows $\phi(n_A)$ and $e_{A_i} \ \ \forall i $, they can compute $d_{A_i}$. Thus, $d_{A_i}$ is private for members of system $B$ but not for members of system $A$.
\end{enumerate}


\textbf{{ Key generation algorithm for system $B$:}}
\begin{enumerate}
\item System $B$ chooses prime $p_B \ = \ 2 $ and $q_B \ = \ 7 $, then computes $ n_B \ = \ 2 \ * \ 7 $.
\item Each $D_i$ follow the following steps :

\begin{center}
\begin{tabular}{ |c|c|c|c|c|c| } 
 \hline
 $D_i$ & $e_{B_i}$ & $d_{B_i}$ & $x_{B_i}$ & $y_{B_i}$ \\
 \hline \hline 
 $D_1$ & $5$ & $5$ & $19$ & $153$ \\
 $D_2$ & $11$ & $5$ & $17$ & $12$ \\
 \hline
\end{tabular}
\end{center}

\item public key of $D_i$ is $(e_{B_i}, y_{B_i})$.
\item private key of $D_i$ is $(d_{B_i}, x_{B_i})$.

\item Note : Since, each member in system $B$ knows $\phi(n_B)$ and $e_{B_i} \ \ \forall i $, they can compute $d_{B_i}$. Thus, $d_{B_i}$ is private for members of system $A$ but not for members of system $B$.
\end{enumerate}


\vspace{0.3 cm}

\textbf{{\large Signature Algorithm}}
  
\vspace{0.2 cm}

\begin{enumerate}
\item Each $S_i$ chooses a random integer $k_i$ and keep it secret and then, generates $$ r_i \ = \ g_A ^{k_i} (y_{B_1} \ y_{B_2} \ . \ . \ . \ y_{B_e} )^{- k_i} \ \ (mod \ p) \ , \    s_i \ = \ g_B ^{k_i} \ \ (mod  \ p) \ and \  w_i \ = \ g_A ^{k_i} \ \ (mod  \ p) $$ and make it public to the system $A$.

\begin{center}
\begin{tabular}{ |c|c|c|c|c|c| } 
 \hline
 $S_i$ & $k_i$ & $r_i$ & $s_i$ & $w_i$ \\
 \hline \hline 
 $S_1$ & $8$ & $136$ & $148$ & $83$ \\
 $S_2$ & $12$ & $114$ & $171$ & $71$ \\
 $S_3$ & $14$ & $134$ & $210$ & $134$ \\
 \hline
\end{tabular}
\end{center}

\item System computes :
\begin{itemize}
\item $r \ = \ \displaystyle \prod_{i=1}^{3} r_i  \ \equiv \ 30 \ \ (mod  \ p) $
\item $s \ = \ (\displaystyle \prod_{i=1}^{3} s_i)^r \ \equiv \ 144 \ \ (mod  \ p) $
\item $w \ = \ (\displaystyle \prod_{i=1}^{3} w_i)^r \ \equiv \ 1 \ \ (mod  \ p) $
\item $ z \ = \ H_2(M,w) $
\item $ t \ = \ z^{\displaystyle \prod_{i=1}^{2} e_{B_i}} \ \equiv \ z^{55} \ \equiv z  \ \ (mod \ 14)$ 
\end{itemize} 
\item After computing $(r,s,w,z,t)$, they are made public to all $P_i$ and then each $P_i$ computes $v_i \ = \ z x_{A_i} \ + \ k_i r$ and make it public to the system $A$.
$$ v_1 \ = \ 7z + 240 \ , \ v_2 \ = \ 16z + 360 \ , \ v_3 \ = \ 21z + 420 $$
\item System computes
\begin{itemize}
\item $\bar{v} \ = \ \displaystyle \sum_{i=1}^{3} v_i \ \equiv 44z \ \ (mod \ 15) $
\item $\bar{u} \ = \ (\bar{v})^{\displaystyle \prod_{i=1}^{3} d_{A_i}} \ \equiv \ (\bar{v})^{5*7*3} \ \equiv \ \bar{v} \ \ (mod \ 15) $
\end{itemize}
System $A$ sends the message $M$ and the signature $(r,s,t,\bar{u})$ to the system $B$ (and all the verifiers $D_i$).
\end{enumerate}


\vspace{0.3 cm}

\textbf{{\large Verification Algorithm}}
  
\vspace{0.2 cm}

Upon receiving the message $M$ and the signature $(r,s,t,\bar{u})$, each designated verifier $D_i$ computes $z_i \ = \ s^{x_{B_i}}$ and make it public. Then, each $D_i$ can separately or together verify and validate the signature by checking the following equations:
\begin{enumerate}
\item Computes $a \ = \ (\bar{u})^{\displaystyle \prod_{i=1}^{3} e_{A_i}} \ \ (mod \ n_A) $ 
\item Computes $b \ = \ t^{\displaystyle \prod_{i=1}^{2} d_{B_i}} \ \ (mod \ n_B) $
\item Check $ c \ = \ g_A ^a  \ (y_{A_1} y_{A_2} y_{A_3})^{-b} \ \ (mod \ p) \ \ = \ \ r^r \displaystyle \prod_{i=1}^{2} z_i \ \ (mod \ p) $
\item Lastly, signatures are accepted if $b \ = \ H_2(M,c)$.
\item If any of the condition from above two conditions fail, verifier can deny to accept the signature. In case of more than $2$ verifiers, if $50 \% $ or more verifiers deny the signature, system $B$ will return the signature to system $A$.
\end{enumerate}


\vspace{0.3 cm}

\textbf{{\large Example 2}}
  
\vspace{0.2 cm}

Let $A = \{ S_1,S_2,S_3,S_4,S_5 \} $ be the group of five signers (or provers) and $B = \{ D_1, D_2, D_3, D_4, D_5, D_6, D_7 \} $ be the group of seven designated verifiers.

Let $p \ = \ 102103 $ be a prime such that $ n_A \ = \ 91 $ and $ n_B \ =  \ 187 $ are factors of $p-1 \ =  \ 102102 $.

Let $\widetilde{G} \ = \ Z_{102103} ^*$ and $g_A \ = \ 44494 \ , \ g_B \ = \ 12733 $  in  $ \widetilde{G} $  such  that $|44494| \ = \ 91 $, $|12733| \ = \ 187 $.

Let $H_2$ be a cryptographically secured hash function with arbitrary bit length.


\textbf{{ Key generation algorithm for system $A$:}}
\begin{enumerate}
\item System $A$ chooses prime $p_A \ = \ 7 $ and $q_A \ = \ 13 $, then computes $n_A \ = \ 7 \ * \ 13 $.
\item Each $S_i$ follow the following steps :

\begin{center}
\begin{tabular}{ |c|c|c|c|c|c| } 
 \hline
 $S_i$ & $e_{A_i}$ & $d_{A_i}$ & $x_{A_i}$ & $y_{A_i}$ \\
 \hline \hline 
 $S_1$ & $5$ & $29$ & $15$ & $58327$ \\
 $S_2$ & $11$ & $59$ & $19$ & $69494$ \\
 $S_3$ & $7$ & $31$ & $24$ & $69058$ \\
 $S_4$ & $23$ & $47$ & $18$ & $88515$ \\
 $S_5$ & $35$ & $35$ & $32$ & $26123$ \\
 \hline
\end{tabular}
\end{center}

\item public key of $S_i$ is $(e_{A_i}, y_{A_i})$.
\item private key of $S_i$ is $(d_{A_i}, x_{A_i})$.

\item Note : Since, each member in system $A$ knows $\phi(n_A)$ and $e_{A_i} \ \ \forall i $, they can compute $d_{A_i}$. Thus, $d_{A_i}$ is private for members of system $B$ but not for members of system $A$.
\end{enumerate}


\textbf{{ Key generation algorithm for system $A$:}}
\begin{enumerate}
\item System $B$ chooses prime $p_B \ = \ 11 $ and $q_B \ = \ 17 $, then computes $ n_B \ = \ 11 \ * \ 17 $.
\item Each $D_i$ follow the following steps :

\begin{center}
\begin{tabular}{ |c|c|c|c|c|c| } 
 \hline
 $D_i$ & $e_{B_i}$ & $d_{B_i}$ & $x_{B_i}$ & $y_{B_i}$ \\
 \hline \hline 
 $D_1$ & $3$ & $107$ & $32$ & $37552$ \\
 $D_2$ & $7$ & $23$ & $17$ & $64089$ \\
 $D_3$ & $11$ & $131$ & $22$ & $63449$ \\
 $D_4$ & $19$ & $59$ & $27$ & $93579$ \\
 $D_5$ & $51$ & $91$ & $18$ & $38061$ \\
 $D_6$ & $27$ & $83$ & $21$ & $91435$ \\
 $D_7$ & $91$ & $51$ & $51$ & $30671$ \\
 \hline
\end{tabular}
\end{center}

\item public key of $D_i$ is $(e_{B_i}, y_{B_i})$.
\item private key of $D_i$ is $(d_{B_i}, x_{B_i})$.

\item Note : Since, each member in system $B$ knows $\phi(n_B)$ and $e_{B_i} \ \ \forall i $, they can compute $d_{B_i}$. Thus, $d_{B_i}$ is private for members of system $A$ but not for members of system $B$.
\end{enumerate}


\vspace{0.3 cm}

\textbf{{\large Signature Algorithm}}
  
\vspace{0.2 cm}

\begin{enumerate}
\item Each $S_i$ chooses a random integer $k_i$ and keep it secret and then, generates $$ r_i \ = \ g_A ^{k_i} (y_{B_1} \ y_{B_2} \ . \ . \ . \ y_{B_e} )^{- k_i} \ \ (mod \ p) \ , \    s_i \ = \ g_B ^{k_i} \ \ (mod  \ p) \ and \  w_i \ = \ g_A ^{k_i} \ \ (mod  \ p) $$ and make it public to the system $A$.

\begin{center}
\begin{tabular}{ |c|c|c|c|c|c| } 
 \hline
 $S_i$ & $k_i$ & $r_i$ & $s_i$ & $w_i$ \\
 \hline \hline 
 $S_1$ & $42980$ & $22513$ & $68227$ & $59022$ \\
 $S_2$ & $68841$ & $77234$ & $67990$ & $84473$ \\
 $S_3$ & $82718$ & $60319$ & $8171$ & $15368$ \\
 $S_4$ & $90739$ & $49375$ & $58517$ & $68284$ \\
 $S_5$ & $19344$ & $40471$ & $35552$ & $91619$ \\
 \hline
\end{tabular}
\end{center}

\item System computes :
\begin{itemize}
\item $r \ = \ \displaystyle \prod_{i=1}^{5} r_i  \ \equiv \ 41707 \ \ (mod  \ p) $
\item $s \ = \ (\displaystyle \prod_{i=1}^{5} s_i)^r \ \equiv \ 90653 \ \ (mod  \ p) $
\item $w \ = \ (\displaystyle \prod_{i=1}^{5} w_i)^r \ \equiv \ 91371 \ \ (mod  \ p) $
\item $ z \ = \ H_2(M,w) $
\item $ t \ = \ z^{\displaystyle \prod_{i=1}^{7} e_{B_i}} \ \equiv \ z^{549972423} \ \equiv z^{103}  \ \ (mod \ 187)$, provided $g.c.d.(z,187) \ = \ 1 $
\end{itemize} 
\item After computing $(r,s,w,z,t)$, they are made public to all $S_i$ and then each $S_i$ computes $v_i \ = \ z x_{A_i} \ + \ k_i r$ and make it public to the system $A$.
$$ v_1 \ = \ 15z + 42980r \ , \ v_2 \ = \ 19z + 68841r \ , \ v_3 \ = \ 24z + 82718r \ , \ v_4\ = \ 18z + 90739r  \ , \ v_5 \ = \ 32z + 19344r  $$
\item System computes
\begin{itemize}
\item $\bar{v} \ = \ \displaystyle \sum_{i=1}^{5} v_i \ \equiv 17z + 31 \ \ (mod \ 91) $
\item $\bar{u} \ = \ (\bar{v})^{\displaystyle \prod_{i=1}^{5} d_{A_i}} \ \equiv \ (\bar{v})^{87252445} \ \equiv \ (\bar{v})^{37} \ \ (mod \ 91) $
\end{itemize}
System $A$ sends the message $M$ and the signature $(r,s,t,\bar{u})$ to the system $B$ (and all the verifiers $D_i$).
\end{enumerate}


\vspace{0.3 cm}

\textbf{{\large Verification Algorithm}}
  
\vspace{0.2 cm}

Upon receiving the message $M$ and the signature $(r,s,t,\bar{u})$, each verifier $D_i$ computes $z_i \ = \ s^{x_{B_i}}$ and make it public. Then, each $D_i$ can separately or together verify and validate the signature by checking the following equations:
\begin{enumerate}
\item Computes $a \ = \ (\bar{u})^{\displaystyle \prod_{i=1}^{5} e_{A_i}} \ \ (mod \ n_A) $ 
\item Computes $b \ = \ t^{\displaystyle \prod_{i=1}^{7} d_{B_i}} \ \ (mod \ n_B) $
\item Check $ c \ = \ g_A ^a  \ (y_{A_1} y_{A_2} y_{A_3} y_{A_4} y_{A_5})^{-b} \ \ (mod \ p) \ \ = \ \ r^r \displaystyle \prod_{i=1}^{7} z_i \ \ (mod \ p) $
\item Lastly, signatures are accepted if $b \ = \ H_2(M,c)$.
\item If any of the condition from above two conditions fail, verifier can deny to accept the signature. In case of more than $2$ verifiers, if $50 \% $ or more verifiers deny the signature, system $B$ will return the signature to system $A$.
\end{enumerate}

\subsection{Algorithm 3}


\vspace{0.3 cm}

\textbf{{\large Set up}}
  
\vspace{0.2 cm}

Let $K$ be the security parameter and $sp \ = \ (E,P,Q,p,n_A,n_B,H_3) \ \leftarrow \ Setup(K)$, where

\begin{itemize}
\item $p$ be a large prime,
\item $E$ be an elliptic curve over $Z_p$ such that $ n_A $ and $ n_B $ are factors of $|E|$, where $n_A$ and $n_B$ are product of two large primes,
\item $P \ , \ Q \ \ \in \ E $ such that $|P| \ = \ n_A $, $|Q| \ = \ n_B $,
\item $H_3$ be a cryptographically secured hash function with arbitrary bit length.

\end{itemize}


\vspace{0.3 cm}

\textbf{{\large Key Generation}}
  
\vspace{0.2 cm}

\emph{\textbf{{ Key generation algorithm for system $A$:}}}
\begin{enumerate}
\item System $A$ chooses prime $p_A$ and $q_A$, then computes $n_A \ = \ p_A \ * \ q_A$.
\item Each $S_i$ follow the following steps :
\begin{itemize}
\item randomly choose $e_{A_i}$ such that g.c.d.$(e_{A_i}, \phi(n_A)) \ = \ 1$.
\item compute $d_{A_i}$ such that $e_{A_i} \ d_{A_i} \ = \ 1 \ (mod \ \phi(n_A))$.
\item choose an integer $x_{A_i} \ \in \ Z_p ^* $.
\item calculate $y_{A_i} \ = \ [x_{A_i}]P \ (mod \ p)$.
\item public key of $S_i$ is $(e_{A_i}, y_{A_i})$.
\item private key of $S_i$ is $(d_{A_i}, x_{A_i})$.
\end{itemize}
\item Note : Since, each member in system $A$ knows $\phi(n_A)$ and $e_{A_i} \ \ \forall i $, they can compute $d_{A_i}$. Thus, $d_{A_i}$ is private for members of system $B$ but not for members of system $A$.
\end{enumerate}


\emph{\textbf{{ Key generation algorithm for system $B$:}}}
\begin{enumerate}
\item System $B$ chooses prime $p_B$ and $q_B$, then computes $ n_B \ = \ p_B \ * \ q_B $.
\item Each $D_i$ follow the following steps :
\begin{itemize}
\item randomly choose $e_{B_i}$ such that g.c.d.$(e_{B_i}, \phi(n_B)) \ = \ 1$.
\item compute $d_{B_i}$ such that $e_{B_i} \ d_{B_i} \ = \ 1 \ (mod \ \phi(n_B))$.
\item choose an integer $x_{B_i} \ \in \ Z_p ^* $.
\item calculate $y_{B_i} \ = \ [x_{B_i}]Q \ (mod \ p)$.
\item public key of $D_i$ is $(e_{B_i}, y_{B_i})$.
\item private key of $D_i$ is $(d_{B_i}, x_{B_i})$.
\end{itemize}
\item Note : Since, each member in system $B$ knows $\phi(n_B)$ and $e_{B_i} \ \ \forall i $, they can compute $d_{B_i}$. Thus, $d_{B_i}$ is private for members of system $A$ but not for members of system $B$.
\end{enumerate}


\vspace{0.3 cm}

\textbf{{\large Signature Algorithm}}
  
\vspace{0.2 cm}

System $A$ generates a signature $(r,s,t,\bar{u})$ for a message $M$ as follows:
\begin{enumerate}
\item Each $S_i$ chooses a random integer $k_i$ and keep it secret.
\item Each $S_i$ generates $r_i \ = \ [{k_i}]P -[{k_i}](y_{B_1} \ + \ y_{B_2} \ + \ . \ . \ . \ + \ y_{B_m} ) \ \ (mod \ p) $ ,  $ s_i \ = \ [{k_i}]Q \ \ (mod  \ p) $ and $ w_i \ = \ [{k_i}]P \ \ (mod  \ p) $ and make it public to the system $A$.
\item System computes :
\begin{itemize}
\item $r \ = \ \displaystyle \sum_{i=1}^{n} r_i \ \ (mod  \ p) $
\item $s \ = \ \displaystyle \sum_{i=1}^{n} s_i \ \ (mod  \ p) $
\item $w \ = \ \displaystyle \sum_{i=1}^{n} w_i \ \ (mod  \ p) $
\item $ z \ = \ H_3(M,w) $
\item $ t \ = \ z^{\displaystyle \prod_{i=1}^{m} e_{B_i}}  \ \ (mod \ n_B)$ 
\end{itemize} 
\item After computing $(r,s,w,z,t)$, they are made public to all $S_i$ and then each $S_i$ computes $v_i \ = \ z . x_{A_i} \ + \ k_i$ and make it public to the system $A$.
\item System computes
\begin{itemize}
\item $\bar{v} \ = \ \displaystyle \sum_{i=1}^{n} v_i \ \ (mod \ n_A) $
\item $\bar{u} \ = \ (\bar{v})^{\displaystyle \prod_{i=1}^{n} d_{A_i}} \ \ (mod \ n_A) $
\end{itemize}
System $A$ sends the message $M$ and the signature $(r,s,t,\bar{u})$ to the system $B$ (and all the verifiers $D_i$).
\end{enumerate}



\vspace{0.3 cm}

\textbf{{\large Verification Algorithm}}
  
\vspace{0.2 cm}

Upon receiving the message $M$ and the signature $(r,s,t,\bar{u})$, each verifier $D_i$ computes $z_i \ = \ [x_{B_i}]s$ and make it public. Then, each $D_i$ can separately or together verify and validate the signature by checking the following equations:
\begin{enumerate}
\item Computes $a \ = \ (\bar{u})^{\displaystyle \prod_{i=1}^{n} e_{A_i}} \ \ (mod \ n_A) $ 
\item Computes $b \ = \ t^{\displaystyle \prod_{i=1}^{m} d_{B_i}} \ \ (mod \ n_B) $
\item Check $ c \ = \ [a]P \ - \ b(y_{A_1} \ + \ y_{A_2} \ + \  . . . \ + \ y_{A_n}) \ \ (mod \ p) \ \ = \ \ r \ + \ \displaystyle \sum_{i=1}^{m} z_i \ \ (mod \ p) $
\item Lastly, signatures are accepted if $b \ = \ H_3(M,c)$.
\item If any of the condition from above two conditions fail, verifier can deny to accept the signature. In case of more than $2$ verifiers, if $50 \% $ or more verifiers deny the signature, system $B$ will return the signature to system $A$.
\end{enumerate}


\vspace{0.3 cm}

\textbf{{\large Transcript Simulation}}
  
\vspace{0.2 cm}

System $B$ (or designated verifier $D$) can simulate the correct transcripts as follows:
\begin{enumerate}
\item System $B$ chooses a random integer $k^{'}$ and keep it secret.
\item Then $B$ generates :
\begin{itemize}
\item $r^{'} \ = \ [{k^{'}}]Q - [{k^{'}}] (y_{A_1} \ + \ y_{A_2} \ + \ . \ . \ . \ + \ y_{A_n} ) \ \ (mod  \ p) $
\item $s^{'} \ = \ [{k^{'}}]P \ \ (mod  \ p) $
\item $w^{'} \ = \ [{k^{'}}]Q \ \ (mod  \ p) $
\item $ z^{'} \ = \ H(m,w^{'}) $
\item $ t^{'} \ = \ {z^{'}}^{(\displaystyle \prod_{i=1}^{n} e_{A_i})}  \ \ (mod \ n_A)$ 
\item $v^{'} \ = \ \displaystyle \sum_{i=1}^{m} {v_i}^{'} \ \ (mod \ n_B) $, where $ {v_i}^{'} \ = \ z^{'} x_{B_i} \ + \ k^{'} $ is made public by $D_i$

\item $u^{'} \ = \ ({v^{'}})^{d_{B}} \ \ (mod \ n_B) $, where $d_B \ = \ \displaystyle \prod_{i=1}^{m} d_{B_i}$
\end{itemize} 
System $B$ simulated signature is $(r^{'},s^{'},t^{'},u^{'})$.
\end{enumerate}


\vspace{0.3 cm}

\textbf{{\large Example 1}}
  
\vspace{0.2 cm}

Let $A = \{ S_1,S_2,S_3 \} $ be the group of three signers (or provers) and $B = \{ D_1, D_2 \} $ be the group of two designated verifiers.

Let $p \ = \ 419 $ be a prime and $E$ be an elliptic curve over $\mathbb{Z}_p$ such that 
$$E \ = \ E_{\mathbb{Z}_p} \ = \ \{ (x,y) \in \mathbb{Z}_p \times \mathbb{Z}_p \ / \ y^2 = x^3 + 2 \}$$

Note that $|E| \ = \ 420$ and $E$ is a cyclic group . Also $ n_A \ = \ 15 $ and $ n_B \ =  \ 14 $ are factors of $|E| $.

Let $P \ = \ (22,151)  \ , \ Q \ = \ (55,156) $  in  $ E $  such  that $|P| \ = \ 15 $, $|Q| \ = \ 14 $.

Let $H_3$ be a cryptographically secured hash function with arbitrary bit length.


\textbf{{ Key generation algorithm for system $A$:}}
\begin{enumerate}
\item System $A$ chooses prime $p_A \ = \ 3 $ and $q_A \ = \ 5 $, then computes $n_A \ = \ 3 \ * \ 5 $.
\item Each $S_i$ follow the following steps :

\begin{center}
\begin{tabular}{ |c|c|c|c|c|c| } 
 \hline
 $S_i$ & $e_{A_i}$ & $d_{A_i}$ & $x_{A_i}$ & $y_{A_i}$ \\
 \hline \hline 
 $S_1$ & $13$ & $5$ & $7$ & $7P$ \\
 $S_2$ & $7$ & $7$ & $16$ & $P$ \\
 $S_3$ & $11$ & $3$ & $21$ & $6P$ \\
 \hline
\end{tabular}
\end{center}

\item public key of $S_i$ is $(e_{A_i}, y_{A_i})$.
\item private key of $S_i$ is $(d_{A_i},x_{A_i})$.

\item Note : Since, each member in system $A$ knows $\phi(n_A)$ and $e_{A_i} \ \ \forall i $, they can compute $d_{A_i}$. Thus, $d_{A_i}$ is private for members of system $B$ but not for members of system $A$.
\end{enumerate}


\textbf{{ Key generation algorithm for system $B$:}}
\begin{enumerate}
\item System $B$ chooses prime $p_B \ = \ 2 $ and $q_B \ = \ 7 $, then computes $ n_B \ = \ 2 \ * \ 7 $.
\item Each $D_i$ follow the following steps :

\begin{center}
\begin{tabular}{ |c|c|c|c|c|c| } 
 \hline
 $D_i$ & $e_{B_i}$ & $d_{B_i}$ & $x_{B_i}$ & $y_{B_i}$ \\
 \hline \hline 
 $D_1$ & $5$ & $5$ & $19$ & $5Q$ \\
 $D_2$ & $11$ & $5$ & $17$ & $3Q$ \\
 \hline
\end{tabular}
\end{center}

\item public key of $D_i$ is $(e_{B_i}, y_{B_i})$.
\item private key of $D_i$ is $(d_{B_i},x_{B_i})$.

\item Note : Since, each member in system $B$ knows $\phi(n_B)$ and $e_{B_i} \ \ \forall i $, they can compute $d_{B_i}$. Thus, $d_{B_i}$ is private for members of system $A$ but not for members of system $B$.
\end{enumerate}


\vspace{0.3 cm}

\textbf{{\large Signature Algorithm}}
  
\vspace{0.2 cm}

\begin{enumerate}
\item Each $S_i$ chooses a random integer $k_i$ and keep it secret and then, generates $$r_i \ = [{k_i}]P -[{k_i}](y_{B_1} \ + \ y_{B_2} \ + \ . \ . \ . \ + \ y_{B_e} ) \ \ (mod \ p) \ \ , \ \  s_i \ = \ [{k_i}]Q \ \ (mod  \ p)  \ \ and \  \  w_i \ = \ [{k_i}]P \ \ (mod  \ p) $$ and make it public to the system $A$ :

\begin{center}
\begin{tabular}{ |c|c|c|c|c|c| } 
 \hline
 $S_i$ & $k_i$ & $r_i$ & $s_i$ & $w_i$ \\
 \hline \hline 
 $S_1$ & $8$ & $8P-8Q$ & $8Q$ & $8P$ \\
 $S_2$ & $12$ & $12P-12Q$ & $12Q$ & $12P$ \\
 $S_3$ & $14$ & $14P$ & $14Q$ & $14P$ \\
 \hline
\end{tabular}
\end{center}

\item System computes :
\begin{itemize}
\item $r \ = \ \displaystyle \sum_{i=1}^{3} r_i  \ \equiv \ 4P-6Q \ \ (mod  \ p) $
\item $s \ = \ \displaystyle \sum_{i=1}^{3} s_i \ \equiv \ 6Q \ \ (mod  \ p) $
\item $w \ = \ \displaystyle \prod_{i=1}^{3} w_i \ \equiv \ 4P \ \ (mod  \ p) $
\item $ z \ = \ H_3(M,w) $
\item $ t \ = \ z^{(\displaystyle \prod_{i=1}^{2} e_{B_i})} \ \equiv \ z^{55} \ \equiv z  \ \ (mod \ 14)$ 
\end{itemize} 
\item After computing $(r,s,w,z,t)$, they are made public to all $S_i$ and then each $S_i$ computes $v_i \ = \ z x_{A_i} \ + \ k_i $ and make it public to the system $A$.
$$ v_1 \ = \ 7z + 8 \ , \ v_2 \ = \ 16z + 12 \ , \ v_3 \ = \ 21z + 14 $$
\item System computes
\begin{itemize}
\item $\bar{v} \ = \ \displaystyle \sum_{i=1}^{3} v_i \ \equiv 14z+4 \ \ (mod \ 15) $
\item $\bar{u} \ = \ (\bar{v})^{(\displaystyle \prod_{i=1}^{3} d_{A_i})} \ \equiv \ (\bar{v})^{5*7*3} \ \equiv \ \bar{v} \ \ (mod \ 15) $
\end{itemize}
System $A$ sends the message $M$ and signature $(r,s,t,\bar{u})$ to the system $B$ (and all the verifiers $D_i$).
\end{enumerate}


\vspace{0.3 cm}

\textbf{{\large Verification Algorithm}}
  
\vspace{0.2 cm}

Upon receiving the message $M$ and the signature $(r,s,t,\bar{u})$, each verifier $D_i$ computes $z_i \ = \ [x_{B_i}]s$ and make it public. Then, each verifier $D_i$ can separately or together verify and validate the signature by checking the following equations:
\begin{enumerate}
\item Computes $a \ = \ (\bar{u})^{\displaystyle \prod_{i=1}^{3} e_{A_i}} \ \ (mod \ n_A) $ 
\item Computes $b \ = \ t^{\displaystyle \prod_{i=1}^{2} d_{B_i}} \ \ (mod \ n_B) $
\item Check $ c \ = \ [a]P \ - \ b(y_{A_1} \ + \ y_{A_2} \ + \ y_{A_3}) \ \ (mod \ p) \ \ = \ \ r \ + \ \displaystyle \sum_{i=1}^{2} z_i \ \ (mod \ p) $
\item Lastly, signatures are accepted if $b \ = \ H_3(M,c)$.
\item If any of the condition from above two conditions fail, verifier can deny to accept the signature. In case of more than $2$ verifiers, if $50 \% $ or more verifiers deny the signature, system $B$ will return the signature to system $A$.
\end{enumerate}


\vspace{0.3 cm}

\textbf{{\large Example 2}}
  
\vspace{0.2 cm}

Let $A = \{ S_1,S_2,S_3,S_4,S_5 \} $ be the group of five signers (or provers) and $B = \{ D_1, D_2,D_3,D_4,D_5,D_6,D_7 \} $ be the group of seven designated verifiers.

Let $p \ = \ 6793 $ be a prime and $E$ be an elliptic curve over $\mathbb{Z}_p$ such that 
$$E \ = \ E_{\mathbb{Z}_p} \ = \ \{ (x,y) \in \mathbb{Z}_p \times \mathbb{Z}_p \ / \ y^2 = x^3 + 5 \}$$

Note that $|E| \ = \ 6916$ and $E$ is a cyclic group. 
Also $ n_A \ = \ 91 $ and $ n_B \ =  \ 38 $ are factors of $|E| $.

Let $P \ = \ (3245,4097)  \ , \ Q \ = \ (5223,4702) $  in  $ E $  such  that $|P| \ = \ 91 $, $|Q| \ = \ 38 $.

Let $H_3$ be a cryptographically secured hash function with arbitrary bit length.


\textbf{ Key generation algorithm for system $A$:}
\begin{enumerate}
\item System $A$ chooses prime $p_A \ = \ 7 $ and $q_A \ = \ 13 $, then computes $n_A \ = \ 7 \ * \ 13 $.
\item Each $S_i$ follow the following steps :

\begin{center}
\begin{tabular}{ |c|c|c|c|c|c| } 
 \hline
 $S_i$ & $e_{A_i}$ & $d_{A_i}$ & $x_{A_i}$ & $y_{A_i}$ \\
 \hline \hline 
 $S_1$ & $5$ & $29$ & $15$ & $15P$ \\
 $S_2$ & $11$ & $59$ & $19$ & $19P$ \\
 $S_3$ & $7$ & $31$ & $24$ & $24P$ \\
 $S_4$ & $23$ & $47$ & $18$ & $18P$ \\
 $S_5$ & $35$ & $35$ & $32$ & $32P$ \\
 \hline
\end{tabular}
\end{center}

\item public key of $S_i$ is $(e_{A_i}, y_{A_i})$.
\item private key of $S_i$ is $(d_{A_i},x_{A_i})$.

\item Note : Since, each member in system $A$ knows $\phi(n_A)$ and $e_{A_i} \ \ \forall i $, they can compute $d_{A_i}$. Thus, $d_{A_i}$ is private for members of system $B$ but not for members of system $A$.
\end{enumerate}


\textbf{ Key generation algorithm for system $B$:}
\begin{enumerate}
\item System $B$ chooses prime $p_B \ = \ 2 $ and $q_B \ = \ 19 $, then computes $ n_B \ = \ 2 \ * \ 19 $.
\item Each $D_i$ follow the following steps :

\begin{center}
\begin{tabular}{ |c|c|c|c|c|c| } 
 \hline
 $D_i$ & $e_{B_i}$ & $d_{B_i}$ & $x_{B_i}$ & $y_{B_i}$ \\
 \hline \hline 
 $D_1$ & $5$ & $11$ & $32$ & $32Q$ \\
 $D_2$ & $7$ & $13$ & $17$ & $17Q$ \\
 $D_3$ & $11$ & $5$ & $22$ & $22Q$ \\
 $D_4$ & $13$ & $7$ & $27$ & $27Q$ \\
 $D_5$ & $17$ & $17$ & $18$ & $18Q$ \\
 $D_6$ & $13$ & $7$ & $21$ & $21Q$ \\
 $D_7$ & $7$ & $13$ & $51$ & $51Q$ \\
 \hline
\end{tabular}
\end{center}

\item public key of $D_i$ is $(e_{B_i}, y_{B_i})$.
\item private key of $D_i$ is $(d_{B_i},x_{B_i})$.

\item Note : Since, each member in system $B$ knows $\phi(n_B)$ and $e_{B_i} \ \ \forall i $, they can compute $d_{B_i}$. Thus, $d_{B_i}$ is private for members of system $A$ but not for members of system $B$.
\end{enumerate}


\vspace{0.3 cm}

\textbf{{\large Signature Algorithm}}
  
\vspace{0.2 cm}

\begin{enumerate}
\item Each $S_i$ chooses a random integer $k_i$ and keep it secret and then, generates 
$$r_i \ = [{k_i}]P -[{k_i}](y_{B_1} \ + \ y_{B_2} \ + \ . \ . \ . \ + \ y_{B_e} ) \ \ (mod \ p) \ \ , \ \  s_i \ = \ [{k_i}]Q \ \ (mod  \ p)  \ \ and \  \  w_i \ = \ [{k_i}]P \ \ (mod  \ p) $$ and make it public to the system $A$ :

\begin{center}
\begin{tabular}{ |c|c|c|c|c|c| } 
 \hline
 $S_i$ & $k_i$ & $r_i$ & $s_i$ & $w_i$ \\
 \hline \hline 
 $S_1$ & $5770$ & $37P-12Q$ & $32Q$ & $37P$ \\
 $S_2$ & $2769$ & $39P-10Q$ & $33Q$ & $39P$ \\
 $S_3$ & $6476$ & $15P-6Q$ & $16Q$ & $15P$ \\
 $S_4$ & $1751$ & $22P-32Q$ & $3Q$ & $22P$ \\
 $S_5$ & $88$ & $88P-14Q$ & $12Q$ & $88P$ \\
 \hline
\end{tabular}
\end{center}

\item System computes :
\begin{itemize}
\item $r \ = \ \displaystyle \sum_{i=1}^{5} r_i  \ \equiv \ 19P-36Q \ \ (mod  \ p) $
\item $s \ = \ \displaystyle \sum_{i=1}^{5} s_i \ \equiv \ 20Q \ \ (mod  \ p) $
\item $w \ = \ \displaystyle \prod_{i=1}^{5} w_i \ \equiv \ 19P \ \ (mod  \ p) $
\item $ z \ = \ H_3(M,w) $
\item $ t \ = \ z^{(\displaystyle \prod_{i=1}^{7} e_{B_i})} \ (mod \ 38)$ 
\end{itemize} 
\item After computing $(r,s,w,z,t)$, they are made public to all $S_i$ and then each $S_i$ computes $v_i \ = \ z x_{A_i} \ + \ k_i $ and make it public to the system $A$.
$$ v_1 \ = \ 15z + 5770 \ , \ v_2 \ = \ 19z + 2769 \ , \ v_3 \ = \ 24z + 6476  \ , \ v_4 \ = \ 18z + 1751  \ , \ v_5 \ = \ 32z + 88 $$
\item System computes
\begin{itemize}
\item $\bar{v} \ = \ \displaystyle \sum_{i=1}^{5} v_i \ \equiv 17z+19 \ \ (mod \ 91) $
\item $\bar{u} \ = \ (\bar{v})^{(\displaystyle \prod_{i=1}^{5} d_{A_i})}  \ (mod \ 91) $
\end{itemize}
System $A$ sends the message $M$ and the signature $(r,s,t,\bar{u})$ to the system $B$ (and all the verifiers $D_i$).
\end{enumerate}


\vspace{0.3 cm}

\textbf{{\large Verification Algorithm}}
  
\vspace{0.2 cm}

Upon receiving the message $M$ and the signature $(r,s,t,\bar{u})$, each verifier $D_i$ computes $z_i \ = \ [x_{B_i}]s$ and make it public. Then, each $D_i$ can separately or together verify and validate the signature by checking the following equations:
\begin{enumerate}
\item Computes $a \ = \ (\bar{u})^{\displaystyle \prod_{i=1}^{5} e_{A_i}} \ \ (mod \ 91) $ 
\item Computes $b \ = \ t^{\displaystyle \prod_{i=1}^{7} d_{B_i}} \ \ (mod \ 38) $
\item Check $ c \ = \ [a]P \ - \ b(y_{A_1} \ + \ y_{A_2} \ + \ y_{A_3} \ + \ y_{A_4} \ + \ y_{A_5}) \ \ (mod \ p) \ \ = \ \ r \ + \ \displaystyle \sum_{i=1}^{7} z_i \ \ (mod \ p) $
\item Lastly, signatures are accepted if $b \ = \ H_3(M,c)$.
\item If any of the condition from above two conditions fail, verifier can deny to accept the signature. In case of more than $2$ verifiers, if $50 \% $ or more verifiers deny the signature, system $B$ will return the signature to system $A$.
\end{enumerate}

\subsection{Final Algorithm}

In this section, we have combined algorithm 1 and 2 to enhance the security of the algorithm.

\vspace{0.3 cm}

\textbf{{\large Set up}}
  
\vspace{0.2 cm}

Let $K$ be the security parameter and $sp \ = \ (G,G_\tau,\widetilde{G},g,g_A,g_B,p,e,n_A,n_B,H_1,H_2) \ \leftarrow \ Setup(K)$, where
\begin{itemize}
\item $p$ be a large prime such that $ n_A $ and $ n_B $ are factors of $p-1$,
\item $G$ is an additive cyclic group of large prime order $p$ and $g$ is a generator of $G$,
\item $G_\tau $ is a multiplicative cyclic group of prime order $p$,
\item $\widetilde{G} \ = \ Z_p ^*$ and $g_A \ , \ g_B \ \ \in \ \widetilde{G}$ such that $|g_A| \ = \ n_A $, $|g_B| \ = \ n_B $,
\item $e : \ G \times G \ \rightarrow \ G_\tau$ is symmetric admissible bilinear pairing map,
\item $H_1 : \{0,1\}^* \ \rightarrow \ G $ is a cryptographically secure hash function,
\item $H_2$ be a cryptographically secured hash function with arbitrary bit length.
\end{itemize}

\vspace{0.3 cm}

\textbf{{\large Key Generation}}
  
\vspace{0.2 cm}

\emph{\textbf{Key generation algorithm for system $A$:}}

\begin{enumerate}
\item System $A$ chooses prime $p_A$ and $q_A$, then computes $n_A \ = \ p_A \ * \ q_A$.
\item Each $S_i$ follow the following steps :
\begin{itemize}
\item choose $a_i \ \in \ G $.
\item calculate $p_i \ = \ a_i g \ (mod \ p)$.
\item randomly choose $e_{A_i}$ such that g.c.d.$(e_{A_i}, \phi(n_A)) \ = \ 1$.
\item compute $d_{A_i}$ such that $e_{A_i} \ d_{A_i} \ = \ 1 \ (mod \ \phi(n_A))$.
\item choose an integer $x_{A_i} \ \in \ Z_p ^* $.
\item calculate $y_{A_i} \ = \ g_A ^{x_{A_i}} \ (mod \ p)$.
\item public key of $S_i$ is $(p_i, e_{A_i}, y_{A_i})$.
\item private key of $S_i$ is $(a_i, d_{A_i}, x_{A_i})$.
\end{itemize}
\item System $A$, then, publishes $u \ = \ \displaystyle\sum_{i=1}^{n} p_i $.
\item Note : Since, each member in system $A$ knows $\phi(n_A)$ and $e_{A_i} \ \ \forall i $, they can compute $d_{A_i}$. Thus, $d_{A_i}$ is private for members of system $B$ but not for members of system $A$. Similarly, $p_i$ is public to system $A$ only.
\end{enumerate}


\emph{\textbf{Key generation algorithm for system $B$:}}
\begin{enumerate}
\item System $B$ chooses prime $p_B$ and $q_B$, then computes $ n_B \ = \ p_B \ * \ q_B $.
\item Each $D_i$ follow the following steps :
\begin{itemize}
\item choose $b_i \ \in \ G $.
\item calculate $q_i \ = \ b_i g \ (mod \ p) $.
\item randomly choose $e_{B_i}$ such that g.c.d.$(e_{B_i}, \phi(n_B)) \ = \ 1$.
\item compute $d_{B_i}$ such that $e_{B_i} \ d_{B_i} \ = \ 1 \ (mod \ \phi(n_B))$.
\item choose an integer $x_{B_i} \ \in \ Z_p ^* $.
\item calculate $y_{B_i} \ = \ g_B ^{x_{B_i}} \ (mod \ p)$.
\item public key of $D_i$ is $(q_i, e_{B_i}, y_{B_i})$.
\item private key of $D_i$ is $(b_i, d_{B_i}, x_{B_i})$.
\end{itemize} 
\item System $B$, then, publishes $v \ = \ \displaystyle\sum_{i=1}^{m} q_i $.

\item Note : Since, each member in system $B$ knows $\phi(n_B)$ and $e_{B_i} \ \ \forall i $, they can compute $d_{B_i}$. Thus, $d_{B_i}$ is private for members of system $A$ but not for members of system $B$.
Similarly, $q_i$ is public to system $B$ only.
\end{enumerate}


\vspace{0.3 cm}

\textbf{{\large Signature Algorithm}}
  
\vspace{0.2 cm}

System $A$ generates a signature $(\sigma,r,s,t,\bar{u})$ for a message $M$ as follows:
\begin{enumerate}
\item Each $S_i$ chooses a random integer $k_i$ and keep it secret.
\item Each $S_i$ generates $\sigma_i \ = \ e(H_1(M),a_i v)  \in \ G_\tau  $ , $r_i \ = \ g_A ^{k_i} (y_{B_1} \ y_{B_2} \ . \ . \ . \ y_{B_m} )^{- k_i} \ \ (mod \ p) $ ,  $ s_i \ = \ g_B ^{k_i} \ \ (mod  \ p) $ and $ w_i \ = \ g_A ^{k_i} \ \ (mod  \ p) $ and make it public to the system $A$.
\item System computes :
\begin{itemize}
\item $\sigma \ = \ \displaystyle\prod_{i=1}^{n} \sigma_i \ \in \ G_\tau $
\item $r \ = \ \displaystyle \prod_{i=1}^{n} r_i \ \ (mod  \ p) $
\item $s \ = \ (\displaystyle \prod_{i=1}^{n} s_i)^r \ \ (mod  \ p) $
\item $w \ = \ (\displaystyle \prod_{i=1}^{n} w_i)^r \ \ (mod  \ p) $
\item $ z \ = \ H_2(M,w) $
\item $ t \ = \ z^{\displaystyle \prod_{i=1}^{m} e_{B_i}}  \ \ (mod \ n_B)$ 
\end{itemize} 
\item After computing $(\sigma,r,s,w,z,t)$, they are made public to all $S_i$ and then each $S_i$ computes $v_i \ = \ z x_{A_i} \ + \ k_i r$ and make it public to the system $A$.
\item System computes
\begin{itemize}
\item $\bar{v} \ = \ \displaystyle \sum_{i=1}^{n} v_i \ \ (mod \ n_A) $
\item $\bar{u} \ = \ (\bar{v})^{\displaystyle \prod_{i=1}^{n} d_{A_i}} \ \ (mod \ n_A) $
\end{itemize}
System $A$ sends the message $M$ and the signature $(\sigma,r,s,t,\bar{u})$ to the system $B$ (and all the designated verifiers $D_i$).
\end{enumerate}


\vspace{0.3 cm}

\textbf{{\large Verification Algorithm}}
  
\vspace{0.2 cm}

Upon receiving the message $M$ and the signature  $(\sigma,r,s,t,\bar{u})$, each designated verifier $D_i$ computes $\zeta_i \ = \ e(H_1(M),b_i u) \ \in \ G_\tau $ and $z_i \ = \ s^{x_{B_i}} \ (mod \ p) $ and make it public. Then, each $D_i$ can separately or together verify and validate the signature by checking the following equations:
\begin{enumerate}
\item Computes $\zeta \ = \ \displaystyle\prod_{i=1}^{m} \zeta_i \ \in  \ G_\tau$.
\item Computes $a \ = \ (\bar{u})^{\displaystyle \prod_{i=1}^{n} e_{A_i}} \ \ (mod \ n_A) $ 
\item Computes $b \ = \ t^{\displaystyle \prod_{i=1}^{m} d_{B_i}} \ \ (mod \ n_B) $
\item Check $\sigma \ \equiv \ \zeta$ in $G_\tau $.
\item Check $ c \ = \ g_A ^a  \ (y_{A_1} y_{A_2} . . . y_{A_n})^{-b} \ \ (mod \ p) \ \ = \ \ r^r \displaystyle \prod_{i=1}^{m} z_i \ \ (mod \ p) $
\item Lastly, signatures are accepted if $b \ = \ H_2(M,c) $ and $\sigma \ \equiv \ \zeta$ in $G_\tau $.
\item If any of the condition from above conditions fail, verifier can deny to accept the signature. In case of more than $2$ verifiers, if $50 \% $ or more verifiers deny the signature, system $B$ will return the signature to system $A$.
\end{enumerate}

\section{Security Analysis}

Now, we show the security properties for this scheme:

\begin{theorem}
Our MSMV-SDVS scheme is correct if it runs smoothly.
\end{theorem}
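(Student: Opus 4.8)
The acceptance predicate factors into two essentially independent tests inherited from the two component schemes: the bilinear test $\sigma\equiv\zeta$ from Algorithm 1 and the factoring/discrete-log chain (the quantities $a$, $b$, the equation for $c$, and the check $b=H_2(M,c)$) from Algorithm 2. The plan is to dispatch them separately. For the pairing test, set $\alpha=\sum_{i=1}^n a_i$ and $\beta=\sum_{j=1}^m b_j$, so that the published aggregates are $u=\alpha g$ and $v=\beta g$. Bilinearity in the second slot collapses $\sigma=\prod_{i=1}^n e(H_1(M),a_i v)=e(H_1(M),v)^{\alpha}=e(H_1(M),g)^{\alpha\beta}$, and symmetrically $\zeta=\prod_{j=1}^m e(H_1(M),b_j u)=e(H_1(M),u)^{\beta}=e(H_1(M),g)^{\alpha\beta}$; hence $\sigma=\zeta$ in $G_\tau$, which settles the pairing half of the check.

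For the arithmetic chain I would first peel off the two RSA-type layers. Since $e_{A_i}d_{A_i}\equiv1\pmod{\phi(n_A)}$ for every $i$, the total exponent satisfies $\prod_i e_{A_i}d_{A_i}\equiv1\pmod{\phi(n_A)}$, so Euler's theorem gives $a=\bar u^{\prod_i e_{A_i}}=\bar v^{\prod_i e_{A_i}d_{A_i}}\equiv\bar v\pmod{n_A}$; the identical computation with the $D_i$-keys yields $b=t^{\prod_i d_{B_i}}\equiv z\pmod{n_B}$. Expanding the signers' shares, $\bar v=\sum_i v_i\equiv zX+rK\pmod{n_A}$, where $X:=\sum_{i=1}^n x_{A_i}$ and $K:=\sum_{i=1}^n k_i$.

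Next I would evaluate both sides of the equation for $c$. Using $y_{A_1}\cdots y_{A_n}=g_A^{X}$ and $\lvert g_A\rvert=n_A$, the left side is $c=g_A^{a}(g_A^{X})^{-b}=g_A^{a-bX}=g_A^{rK}$, the last equality substituting $a\equiv zX+rK$ and $b\equiv z$. For the right side, write $y_{B_1}\cdots y_{B_m}=g_B^{W}$ with $W:=\sum_{j=1}^m x_{B_j}$; then each $r_i=(g_A g_B^{-W})^{k_i}$ gives $r=g_A^{K}g_B^{-WK}$, while $s=(\prod_i s_i)^r=g_B^{Kr}$ forces $\prod_{j=1}^m z_j=\prod_j s^{x_{B_j}}=g_B^{WKr}$. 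Multiplying, $r^{r}\prod_{j}z_j=g_A^{Kr}g_B^{-WKr}g_B^{WKr}=g_A^{Kr}=c$, so the equation holds. Finally $w=(\prod_i w_i)^r=g_A^{Kr}=c$, whence $H_2(M,c)=H_2(M,w)=z=b$, which closes the last acceptance condition together with $\sigma=\zeta$.

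The one genuinely delicate point, and precisely what the hypothesis ``runs smoothly'' absorbs, is the transition from the congruences $a\equiv\bar v\pmod{n_A}$ and $b\equiv z\pmod{n_B}$ to their simultaneous use as ordinary integers inside a single $g_A$-exponent: the cancellation $a-bX\equiv rK$ requires $(z-b)X\equiv0\pmod{n_A}$, hence that the hash $z=H_2(M,w)$ be recovered as the very same integer by the $n_B$-layer, which in turn needs $z$ to lie below $n_B$ together with $\gcd(z,n_B)=\gcd(\bar v,n_A)=1$ (exactly the tacit conventions of the worked examples). I would therefore state these invertibility and size conditions as the smoothness hypothesis at the outset and check each reduction against them; granting them, the two displayed chains are mechanical and establish correctness.
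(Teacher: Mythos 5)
Your proof is correct and follows essentially the same route as the paper's own: bilinearity collapses $\sigma$ and $\zeta$ to the common value $e(H_1(M),g)^{\alpha\beta}$, Euler's theorem unwraps the two RSA layers to give $a \equiv \bar v \pmod{n_A}$ and $b \equiv z \pmod{n_B}$, and both sides of the check equation reduce to $g_A^{rK}$. You in fact go slightly beyond the paper by closing the loop with $c = w$ (hence $b = H_2(M,c)$), a step the paper leaves implicit, and by isolating the tacit size/coprimality conditions (that $z$ be recovered as the same integer below $n_B$) which the paper's proof silently assumes when it cancels $bX$ against $zX$ in the $g_A$-exponent.
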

\begin{proof}
The signature verification procedure is correct since:
\begin{enumerate}

\item 

$\sigma \ \ \equiv \ \displaystyle\prod_{i=1}^{n} \sigma_i \ \ \equiv \ \displaystyle\prod_{i=1}^{n}e(H_1(M),a_i v)  \ \ \equiv \ e(H_1(M),\displaystyle\sum_{i=1}^{n} a_i v) \ \ \equiv \ e(H_1(M),\displaystyle\sum_{i=1}^{n} \sum_{j=1}^{m} a_i b_j g ) $ 

$ \ \ \ \ \equiv \ e(H_1(M),\displaystyle\sum_{j=1}^{m} b_j u) \ \ \equiv \ \displaystyle\prod_{j=1}^{m} e(H_1(M), b_j u)  \ \ \equiv  \ \displaystyle\prod_{j=1}^{m} \zeta_j  \ \equiv \ \zeta $ in $G_{\tau}$

\item 

$a \ \ \equiv \ (\bar{u})^{\displaystyle\prod_{i=1}^{n} e_{A_i}} \ \ \equiv \ (\bar{v})^{(\displaystyle\prod_{i=1}^{n} d_{A_i})(\displaystyle\prod_{i=1}^{n} e_{A_i})} \ \ \equiv \ \bar{v}  \ (mod \ n_A) $

\item 

$b \ \ \equiv \ t^{\displaystyle\prod_{i=1}^{m} d_{B_i}} \ \ \equiv \ z^{(\displaystyle\prod_{i=1}^{m} e_{B_i})(\displaystyle\prod_{i=1}^{m} d_{B_i})} \ \ \equiv \ z  \ (mod \ n_B) $

\item

${g_A}^a (y_{A_1} \ . \ . \ . \ y_{A_n})^{-b} \ \ \equiv  \ {g_A}^a ({g_A}^{x_{A_1}} \ . \ . \ . \ {g_A}^{x_{A_n}})^{-b} \ \ \equiv \ {g_A}^{\bar{v}} \ ( {g_A}^{-b \displaystyle\sum_{i=1}^{n} x_{A_i} }) \ \ \equiv \ {g_A}^{r \displaystyle\sum_{i=1}^{n}k_i} \ (mod \ p) $

\item

$r^r \displaystyle\prod_{i=1}^{m} z_i \ \ \equiv \ \ r^r \displaystyle\prod_{i=1}^{m} s^{x_{B_i}} \ \ \equiv \ \ (\displaystyle\prod_{j=1}^{n} r_j)^r \ s^{\displaystyle\sum_{i=1}^{m} x_{B_i}}$ 

$\ \ \ \ \ \ \ \ \  \ \ \ \ \equiv \ ({g_A}^{\displaystyle\sum_{j=1}^{n} k_j} (y_{B_1} \ . \ . \ . \ Y_{B_m})^{-\displaystyle\sum_{j=1}^{n} k_j})^r \ ({g_B}^{r \displaystyle\sum_{j=1}^{n} k_j})^{\displaystyle\sum_{i=1}^{m} x_{B_i}} $ 

$ \ \ \ \ \ \ \ \ \ \ \ \ \ \equiv \ {g_A}^{r \displaystyle\sum_{j=1}^{n} k_j} \ (mod \ p ) $

\end{enumerate}
\end{proof}

\begin{theorem}
Our MSMV-SDVS scheme is strong designated verifier signature scheme.
\end{theorem}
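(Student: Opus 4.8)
The plan is to verify the two requirements bundled into the definition of a strong designated verifier signature scheme: non-transferability, realised through an indistinguishable transcript simulation carried out by the designated verifiers, and strongness, namely that the verification relations cannot be evaluated by anyone lacking the verifiers' private keys. Throughout I work in the random oracle model, treating $H_1$ and $H_2$ as random oracles, and I invoke the hardness of the discrete logarithm problem, the factoring problem, and the bilinear Diffie--Hellman problem for the pairing component.

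First I would exhibit the simulation of the combined signature $(\sigma,r,s,t,\bar u)$. The bilinear component is immediate: by the correctness theorem just proved, $\sigma \equiv \prod_{j=1}^m e(H_1(M),b_j u) \equiv \zeta$ in $G_\tau$, so the designated verifiers reproduce the genuine $\sigma$ exactly using only their own secrets $b_j$. For the remaining four components I would invoke the role-swapped construction of the Transcript-Simulation of Algorithm~2: system $B$ draws a uniform $k'$ and sets $r' = g_B^{k'}(y_{A_1}\cdots y_{A_n})^{-k'}$, $s' = g_A^{k'r'}$, $w' = g_B^{k'r'}$, $z' = H_2(M,w')$, ${z'}^{\prod_i e_{A_i}}\pmod{n_A}$ for $t'$, and $u' = (v')^{d_B}\pmod{n_B}$ with $v' = \sum_i(z'x_{B_i}+k'r')$ and $d_B = \prod_i d_{B_i}$, yielding the simulated signature $(\sigma',r',s',t',u')$.

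Next I would check that the simulated transcript satisfies the verification relations and is distributed identically to a genuine one. Passing verification is the mirror image of the correctness theorem with the roles of $A$ and $B$ interchanged: substituting the simulated values collapses both sides of the central relation to the common value $g_B^{\,k'r'}$, the exact analogue of the value $g_A^{\,rk}$ (with $k=\sum_i k_i$) reached in the correctness computation. For the distributions, the only randomness entering a genuine signature is through $k=\sum_i k_i$, which is uniform and renders $r$ uniform over the relevant subgroup while every other component is a deterministic function of $M$, $k$, and the fixed keys; the simulation draws $k'$ uniformly and obtains the same functional form, so the two transcripts coincide in distribution. This is the step I expect to be the main obstacle: one must confirm that the role-swap lands in precisely the same distribution, reconciling the two moduli $n_A,n_B$ and the combinatorial factor appearing in $v'$, and that the random-oracle outputs $z,z'$ align, so that no efficient distinguisher can separate genuine from simulated output.

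Finally I would argue strongness. Each verification quantity hinges on a verifier secret unavailable to a third party: the test $\sigma\equiv\zeta$ requires $\zeta=\prod_j e(H_1(M),b_j u)$, and recovering $\sigma=e(g,g)^{h\alpha\beta}$ from the public data $u=\alpha g$, $v=\beta g$ and $H_1(M)=hg$ alone (where $\alpha=\sum_i a_i$, $\beta=\sum_j b_j$) is precisely a bilinear Diffie--Hellman instance; the value $z_i=s^{x_{B_i}}$ requires the discrete-log secret $x_{B_i}$; and $b=t^{\prod_i d_{B_i}}$ requires the RSA-type exponents $d_{B_i}$, whose recovery is tied to factoring $n_B$. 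Hence no party outside $B$ can so much as evaluate the verification procedure, and combined with the indistinguishable simulation this establishes that the scheme is a strong designated verifier signature scheme.
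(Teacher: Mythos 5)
Your proposal follows essentially the same route as the paper's proof: non-transferability via the transcript simulation of the combined scheme (the verifiers reproduce $\sigma$ exactly as $\zeta$, and $(r',s',t',u')$ come from the role-swapped simulation), identical distribution of genuine and simulated transcripts from the uniform choice of $k'$, and strongness from the fact that the verification relations require the verifier secrets $b_j$, $x_{B_j}$, $d_{B_j}$. Your write-up is actually more explicit than the paper's brief argument --- in particular the named reductions to the bilinear Diffie--Hellman, discrete-log and factoring problems, and the flagged concern about reconciling the moduli $n_A$ and $n_B$ in the simulated distribution, a subtlety the paper passes over with its one-line $1/n_B$ probability claim.
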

\begin{proof}
We have seen that the simulated signature produced by system $B$ (or designated verifier $D$) is indistinguishable from the signature generated by system $A$. Also, the probability that the simulated signature produced randomly from the set of system $A$'s signature depends upon the randomness of $k' \ \in \ {\mathbb{Z}_{n_B} ^*}$ and it is $1/{n_B}$. Thus the two signatures have the same probability distribution, and hence the proposed scheme is a designated verifier signature scheme. 

Our scheme involves each designated verifier and their secret keys ($b_i$, $x_{B_i}$) in the verification process. Moreover, the signature ($\sigma, \ r, \ s, \ t, \ \bar{u}$) produced by system $A$ is indistinguishable from the signature ($\zeta, \ r', \ s', \ t', \ \bar{u}^{'}$) generated by system $B$. Thus, system $B$ can not convince any third party whether a signature is created by signers or verifiers. Thus, our scheme satisfies the strongness property.

\end{proof}


\begin{theorem}
Our MSMV-SDVS scheme is unforgeable unless an adversary has access to the secret keys of all signers and verifiers.
\end{theorem}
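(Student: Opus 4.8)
The plan is to work in the random-oracle model and prove existential unforgeability under chosen-message attack. I would let $\mathcal{F}$ be a forger that receives the public parameters together with all public keys $(p_i,e_{A_i},y_{A_i})$ and $(q_i,e_{B_i},y_{B_i})$, is granted oracle access to $H_1$, $H_2$ and to the signing algorithm of system $A$, and finally outputs a purported signature $(\sigma^*,r^*,s^*,t^*,\bar{u}^*)$ on a message $M^*$ that was never queried. The first observation is that any accepted forgery must pass every verification check simultaneously; in particular it must satisfy the pairing check $\sigma^*\equiv\zeta$ in $G_\tau$ and the discrete-logarithm check $c=g_A^{\,a}(y_{A_1}\cdots y_{A_n})^{-b}=r^{r}\prod_{i}z_i$ with $b=H_2(M^*,c)$. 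I would therefore reduce the ability to produce each of these two independent components to a distinct hard problem, and conclude that a successful forger breaks at least one of bilinear Diffie--Hellman, the discrete-logarithm problem, or factoring.

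For the pairing component I would reduce to the Bilinear Diffie--Hellman problem. Writing $a=\sum_i a_i$ and $b=\sum_j b_j$, the correctness computation already shows $\sigma=e(H_1(M),ab\,g)$, so with $u=ag$ and $v=bg$ fixed as the BDH instance and $H_1(M^*)$ programmed to the challenge point $cg$, a valid $\sigma^*$ is exactly $e(g,g)^{abc}$. The key point that makes the reduction go through is that signing queries can be answered without knowing $a$ or $b$: programming $H_1(M)=\alpha_M g$ for a known $\alpha_M$ gives $\sigma=e(u,v)^{\alpha_M}$, which is computable from the instance alone. Hence a forger who never queried $M^*$ yields the BDH solution.

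For the residual component $(r,s,t,\bar{u})$ I would argue along two lines. The values $\bar{u}=\bar{v}^{\prod_i d_{A_i}}\ (\mathrm{mod}\ n_A)$ and $t$ are RSA-type objects: the verifier recovers $a=\bar{u}^{\prod_i e_{A_i}}\equiv\bar v$ and $b=t^{\prod_i d_{B_i}}\equiv z$, so producing a fresh valid $\bar u$ without the private exponents $\{d_{A_i}\}$ means inverting the public RSA map modulo $n_A$, which is as hard as factoring $n_A$. The discrete-logarithm equation is Schnorr-like in the commitment $r$ and the challenge $z=H_2(M^*,w)$; I would apply the forking lemma to rewind $\mathcal{F}$ and obtain two accepting transcripts that share the commitment but carry distinct hash replies, from which the aggregated secret $\sum_i x_{A_i}$ (equivalently a discrete logarithm) can be extracted. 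To answer signing queries here without secret keys I would use the standard honest-verifier simulator: pick $z$ and $\bar v$ first, solve backwards for $r$, and program $H_2$ accordingly.

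The hard part, and where I would spend the most care, is the simulation of the signing oracle for the \emph{combined} scheme: a single signing query must produce mutually consistent $\sigma$, $r$, $s$, $w$, $t$ and $\bar u$ while one hard instance is embedded, and this must be coordinated across the two random oracles $H_1$ and $H_2$ without triggering a programming collision. The multi-signer aggregation adds a further wrinkle, since the simulated aggregate must be indistinguishable from a genuine product or sum of $n$ honest partial signatures even though the reduction controls none of the underlying secrets. Bounding the collision probability of the oracle programming and paying the usual quadratic loss of the forking lemma is what ultimately yields the negligible forgery bound, establishing that no polynomial forger can succeed without the full set of signer and verifier secret keys.
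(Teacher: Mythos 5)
Your strategy is sound in outline, but it is a genuinely different route from what the paper actually does, and a far more demanding one. The paper's proof is a short, informal key-separation argument: it observes (i) that recovering the secrets $a_i$ or $b_j$ from the published $p_i$ or $q_j$ is infeasible, and (ii) that even an adversary who somehow holds all $a_i$ or $b_j$ --- and can therefore compute $\sigma$ or $\zeta$ --- still lacks the remaining secrets $d_{A_i}, x_{A_i}, d_{B_j}, x_{B_j}$ needed to produce $r, s, t, \bar{u}$; it then concludes by appealing jointly to the pairing, factoring, and discrete-logarithm assumptions. There is no formal adversary model, no signing oracle, no reduction, and no forking lemma --- nothing resembling your EUF-CMA game. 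Your plan (embedding a BDH instance in the aggregated keys $u = ag$, $v = bg$ while programming $H_1(M^*) = cg$, treating $\bar{u}$ and $t$ as RSA-type objects, and extracting $\sum_i x_{A_i}$ from the Schnorr-like component by rewinding) is the kind of argument that would actually establish unforgeability with concrete bounds, whereas the paper's version only asserts it; each of your three reductions is correct in outline. Two caveats on your side. First, the combined-oracle simulation you flag as the hard part is easier than you fear: in each reduction the hard instance sits in only \emph{one} component's keys, so the reduction can generate all other components' secrets itself, answer that part of every signing query honestly, and program only the single random oracle tied to the embedded instance --- no cross-oracle coordination problem arises. Second, your claim that producing a fresh valid $\bar{u}$ without the $d_{A_i}$ ``is as hard as factoring $n_A$'' overstates what is known: inverting the RSA map is no harder than factoring, but the converse is open, so that leg of the argument should be stated as a reduction to the RSA inversion problem rather than to factoring.
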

\begin{proof}
Any adversary, who does not know the private key of all signers and designated verifiers, can not forge the signature. It is computationally infeasible to forge the signature in polynomial time.

Notation : $ 1 \leq i \leq n $, $ 1 \leq j \leq m $, where $n$ and $m$ are number of signers and designated verifiers respectively.

\begin{itemize}
\item If attacker knows all $p_i$ or $q_j$, it is infeasible to compute all $a_i$ or $b_j$ in polynomial time.
\item If attacker knows all $a_i$ or $b_j$, signature $\sigma$ or $\zeta$ can be created but to compute $r,s,t,\bar{u}$, he further require $d_{A_i}, x_{A_i}, d_{B_j}, x_{B_j} \ \ \ \forall  i ,j $

\end{itemize}

Since our scheme is based on bilinear pairing, factorization problem, and discrete logarithm problem, it is computationally infeasible to compute $\sigma, r,s,t,\bar{u} $ in polynomial time.
\end{proof}


\begin{theorem}
Our MSMV-SDVS scheme is non-transferable, and protects the identity of the actual signer.
\end{theorem}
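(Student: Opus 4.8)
The plan is to separate the statement into two assertions and handle each with machinery already in place. The first assertion, non-transferability, says that a designated verifier cannot convince any third party that a signature originated from the signers; the second, identity protection, says that the aggregated signature does not betray which individual signer participated.

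First I would appeal to the Transcript-Simulation procedure supplied for the scheme, in which system $B$ (or a single verifier $D$) selects a fresh random $k' \in \mathbb{Z}_{n_B}^*$ and, using only verifier-side key material, outputs a tuple $(\zeta, r', s', t', u')$. By Theorem 4.2 this simulated tuple is indistinguishable from a genuine signature. The key step here is to upgrade ``indistinguishable'' to an equality of probability distributions: I would check coordinate by coordinate that $r'$, $s'$, $t'$, and $u'$ range over exactly the same sets, with the same probabilities, as $r$, $s$, $t$, $\bar{u}$, using that $k'$ is uniform on $\mathbb{Z}_{n_B}^*$ just as each signer nonce $k_i$ is, and that $\sigma \equiv \zeta$ in $G_\tau$ by the correctness calculation of Theorem 4.1. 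Once the two families of transcripts coincide as distributions, any purported proof of origin exhibited by a verifier could have been fabricated by that verifier alone, so no conviction can be transferred to a third party; this is exactly non-transferability.

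Second, for identity protection I would observe that every public quantity is a symmetric aggregate over the entire signer set: outsiders and verifiers see only $u = \sum_{i=1}^{n} p_i$, $\sigma = \prod_{i=1}^{n}\sigma_i$, $r = \prod_{i=1}^{n} r_i$, together with $s$, $t$, and $\bar{u}$, none of which exposes a single summand or factor. To link a signature to a particular signer $S_i$ one would have to recover $a_i$ from $p_i = a_i g$ or from $\sigma_i = e(H_1(M), a_i v)$; by the discrete logarithm and bilinear-pairing hardness assumptions of Section 2 this is infeasible in polynomial time. Hence the actual signer's identity stays hidden even from the designated verifiers, and therefore also from any third party.

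The main obstacle I anticipate is the distributional equality in the first part. Moving from computational indistinguishability to an exact matching of distributions forces one to track how the simulation swaps the two systems' roles — the generators $g_A$ and $g_B$, the exponent products $\prod e_{A_i}$ versus $\prod e_{B_i}$, and the moduli $n_A$ versus $n_B$ are all interchanged — and to confirm that this swap leaves the induced law on the signature space invariant. Establishing that symmetry is the delicate point, and it is precisely what underwrites both non-transferability and the concealment of the signer's identity.
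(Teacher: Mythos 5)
Your decomposition differs from the paper's in a way worth noting: the paper proves both halves of the statement in one stroke, by asserting (as in its Theorem 4.2) that the simulated tuple $(\zeta, r', s', t', u')$ is indistinguishable from a genuine $(\sigma, r, s, t, \bar{u})$, and reading ``protects the identity of the actual signer'' as the $A$-versus-$B$ ambiguity --- no third party can tell whether signers or verifiers produced a given tuple. Your first half follows that same route, but your second half is genuinely different: you argue that the aggregates $u=\sum_{i} p_i$, $\sigma=\prod_{i}\sigma_i$, $r=\prod_{i} r_i$ hide which \emph{individual} $S_i$ contributed, invoking the discrete-logarithm and pairing assumptions. That is a reasonable complementary reading, and its justification is essentially the paper's unforgeability argument (Theorem 4.3) rather than anything in the paper's proof of this theorem.

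The genuine gap is in the step you yourself flag as delicate: the coordinate-by-coordinate equality of distributions is not merely unverified in your proposal, it is false for this scheme. The simulation does not re-randomize a signature; it swaps the two systems' roles, and the swap is visible in the \emph{ranges} of the components. In a real signature, $s=(\prod_i s_i)^r=g_B^{\,r\sum_i k_i}$ lies in the subgroup $\langle g_B\rangle$ of order $n_B$, $t$ is a residue mod $n_B$, and $\bar{u}$ a residue mod $n_A$; in the simulated tuple, $s'=g_A^{\,k'r'}$ lies in $\langle g_A\rangle$ of order $n_A$, $t'$ is a residue mod $n_A$, and $u'$ a residue mod $n_B$. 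Since $n_A\neq n_B$ in any meaningful instantiation (equal moduli would force equal factorizations, so each system could compute the other's $d$-keys), these are different sets: a third party can distinguish a real transcript from a simulated one with certainty, for instance by testing whether $s^{n_B}\equiv 1 \pmod{p}$, which holds for genuine signatures but fails for simulated ones whenever the order of $s'$ does not divide $\gcd(n_A,n_B)$. So the exact ``swap symmetry'' you hope to establish does not hold, and with it both your non-transferability conclusion and (under the paper's reading) the identity-protection claim lose their support. The paper never confronts this because it asserts indistinguishability outright instead of checking it componentwise; your plan is methodologically sounder, but carried out honestly it exposes a defect in the scheme rather than completing the proof. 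A repair would require either modifying the simulator to output components living in the same groups and moduli as system $A$'s signatures, or retreating to a computational-indistinguishability claim under an explicitly stated assumption.
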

\begin{proof}
In this scheme, signature ($\zeta, \ r', \ s', \ t', \ \bar{u}^{'}$) produced by system B is indistinguishable from the signature ($\sigma, \ r, \ s, \ t, \ \bar{u}$) generated by system A. Thus, individually or collectively, Verifiers can not convince anyone that the signer generated the signature, and any third party can not determine who generated which signature. Thus our scheme is non-transferable and protects the identity of the actual signer.
\end{proof}


\section{Conclusion}

This paper presents a multi-signer strong designated multi-verifier signature scheme based on multiple cryptographic algorithms. Since all the signers generate their signatures individually and make them public to the system to generate a common signature, this scheme is non-interactive. Also, our scheme requires the authorization of each participating signer, making it applicable to various dimensions using blockchain technology. Our scheme satisfies strongness, unforgeability, and non-transferability properties in the random oracle model.

\newpage

 \bibliographystyle{plain}
 \bibliography{References_paper1}

\begin{thebibliography}{10}

\bibitem{chang2011id}
Ting-Yi Chang.
\newblock An id-based multi-signer universal designated multi-verifier
  signature scheme.
\newblock {\em Information and Computation}, 209(7):1007--1015, 2011.

\bibitem{chaum1989undeniable}
David Chaum and Hans Van~Antwerpen.
\newblock Undeniable signatures.
\newblock In {\em Conference on the Theory and Application of Cryptology},
  pages 212--216. Springer, 1989.

\bibitem{da2012comment}
Liao Da-jian and Tang Yuan-sheng.
\newblock Comment on lee et al.’s strong designated verifier signature scheme
  and its improvement.
\newblock In {\em Advances in Technology and Management}, pages 309--314.
  Springer, 2012.

\bibitem{deng2018id}
Lunzhi Deng, Jiwen Zeng, and Huawei Huang.
\newblock Id-based multi-signer universal designated multi-verifier signature
  based on discrete logarithm.
\newblock {\em Chiang MAI J. Sci}, 45(1):617--624, 2018.

\bibitem{diffie1976new}
Whitfield Diffie and Martin Hellman.
\newblock New directions in cryptography.
\newblock {\em IEEE transactions on Information Theory}, 22(6):644--654, 1976.

\bibitem{elgamal1985public}
Taher ElGamal.
\newblock A public key cryptosystem and a signature scheme based on discrete
  logarithms.
\newblock {\em IEEE transactions on information theory}, 31(4):469--472, 1985.

\bibitem{islam2012certificateless}
SK~Hafizul Islam and GP~Biswas.
\newblock Certificateless strong designated verifier multisignature scheme
  using bilinear pairings.
\newblock In {\em Proceedings of the international conference on advances in
  computing, communications and informatics}, pages 540--546, 2012.

\bibitem{jakobsson1996designated}
Markus Jakobsson, Kazue Sako, and Russell Impagliazzo.
\newblock Designated verifier proofs and their applications.
\newblock In {\em International Conference on the Theory and Applications of
  Cryptographic Techniques}, pages 143--154. Springer, 1996.

\bibitem{khan2017secure}
Asif~Uddin Khan and Bikram~Kesari Ratha.
\newblock A secure strong designated verifier signature scheme.
\newblock {\em Int. J. Netw. Secur.}, 19(4):599--604, 2017.

\bibitem{laguillaumie2004multi}
Fabien Laguillaumie and Damien Vergnaud.
\newblock Multi-designated verifiers signatures.
\newblock In {\em International Conference on Information and Communications
  Security}, pages 495--507. Springer, 2004.

\bibitem{laguillaumie2007multi}
Fabien Laguillaumie and Damien Vergnaud.
\newblock Multi-designated verifiers signatures: anonymity without encryption.
\newblock {\em Information Processing Letters}, 102(2-3):127--132, 2007.

\bibitem{lee2007strong}
Ji-Seon Lee and Jik~Hyun Chang.
\newblock Strong designated verifier signature scheme with message recovery.
\newblock In {\em The 9th International Conference on Advanced Communication
  Technology}, volume~1, pages 801--803. IEEE, 2007.

\bibitem{lee2009comment}
Ji-Seon Lee and Jik~Hyun Chang.
\newblock Comment on saeednia et al.'s strong designated verifier signature
  scheme.
\newblock {\em Computer Standards \& Interfaces}, 31(1):258--260, 2009.

\bibitem{li2005id}
Xiangxue Li and Kefei Chen.
\newblock Id-based multi-proxy signature, proxy multi-signature and multi-proxy
  multi-signature schemes from bilinear pairings.
\newblock {\em Applied Mathematics and Computation}, 169(1):437--450, 2005.

\bibitem{ming2008universal}
Yang Ming and Yumin Wang.
\newblock Universal designated multi verifier signature scheme without random
  oracles.
\newblock {\em Wuhan University Journal of Natural Sciences}, 13(6):685--691,
  2008.

\bibitem{rivest1978method}
Ronald~L Rivest, Adi Shamir, and Leonard Adleman.
\newblock A method for obtaining digital signatures and public-key
  cryptosystems.
\newblock {\em Communications of the ACM}, 21(2):120--126, 1978.

\bibitem{saeednia2003efficient}
Shahrokh Saeednia, Steve Kremer, and Olivier Markowitch.
\newblock An efficient strong designated verifier signature scheme.
\newblock In {\em International conference on information security and
  cryptology}, pages 40--54. Springer, 2003.

\bibitem{sarde2015strong}
Pankaj Sarde and Amitabh Banerjee.
\newblock Strong designated verifier signature scheme based on discrete
  logarithm problem.
\newblock {\em Journal of Discrete Mathematical Sciences and Cryptography},
  18(6):877--885, 2015.

\bibitem{shapuan2018new}
N~Shapuan and ES~Ismail.
\newblock A new strong designated verifier signature scheme.
\newblock In {\em AIP Conference Proceedings}, volume 1940, page 020122. AIP
  Publishing LLC, 2018.

\bibitem{shapuan2021strong}
Nadiah Shapuan and Eddie~Shahril Ismail.
\newblock A strong designated verifier signature scheme with hybrid
  cryptographic hard problems.
\newblock {\em Journal of Applied Security Research}, pages 1--13, 2021.

\bibitem{tzeng2004nonrepudiable}
Shiang-Feng Tzeng, Cheng-Ying Yang, and Min-Shiang Hwang.
\newblock A nonrepudiable threshold multi-proxy multi-signature scheme with
  shared verification.
\newblock {\em Future Generation Computer Systems}, 20(5):887--893, 2004.

\bibitem{wang2003attack}
Guilin Wang.
\newblock An attack on not-interactive designated verifier proofs for
  undeniable signatures.
\newblock {\em IACR Cryptol. ePrint Arch.}, 2003:243, 2003.

\bibitem{yang2008strong}
Bo~Yang, Zibi Xiao, Yixian Yang, Zhengming Hu, and Xinxin Niu.
\newblock A strong multi-designated verifiers signature scheme.
\newblock {\em Frontiers of Electrical and Electronic Engineering in China},
  3(2):167--170, 2008.

\bibitem{zhang2008multi}
Yaling Zhang, Jing Zhang, and Yikun Zhang.
\newblock Multi-signers strong designated verifier signature scheme.
\newblock In {\em 2008 Ninth ACIS International Conference on Software
  Engineering, Artificial Intelligence, Networking, and Parallel/Distributed
  Computing}, pages 324--328. IEEE, 2008.

\end{thebibliography}

\end{document}